\documentclass[a4paper,UKenglish,cleveref, autoref, thm-restate]{lipics-v2021}
%This is a template for producing LIPIcs articles. 
%See lipics-v2021-authors-guidelines.pdf for further information.
%for A4 paper format use option "a4paper", for US-letter use option "letterpaper"
%for british hyphenation rules use option "UKenglish", for american hyphenation rules use option "USenglish"
%for section-numbered lemmas etc., use "numberwithinsect"
%for enabling cleveref support, use "cleveref"
%for enabling autoref support, use "autoref"
%for anonymousing the authors (e.g. for double-blind review), add "anonymous"
%for enabling thm-restate support, use "thm-restate"
%for enabling a two-column layout for the author/affilation part (only applicable for > 6 authors), use "authorcolumns"
%for producing a PDF according the PDF/A standard, add "pdfa"
\usepackage{xspace}

%\pdfoutput=1 %uncomment to ensure pdflatex processing (mandatatory e.g. to submit to arXiv)
\hideLIPIcs  %uncomment to remove references to LIPIcs series (logo, DOI, ...), e.g. when preparing a pre-final version to be uploaded to arXiv or another public repository

%\graphicspath{{./graphics/}}%helpful if your graphic files are in another directory

\bibliographystyle{plainurl}% the mandatory bibstyle

% \title{Improved FPT-Approximation Algorithms for $\hh$-Treewidth and $\hh$-Elimination Distance}

\title{5-Approximation for $\hh$-Treewidth Essentially as Fast as $\hh$-Deletion Parameterized by Solution Size}

\titlerunning{5-Approximation for $\hh$-Treewidth} %TODO optional, please use if title is longer than one line

\author{Bart M.\,P. Jansen %\and \url{https://www.win.tue.nl/~bjansen/}
}{Eindhoven University of Technology, The Netherlands}{b.m.p.jansen@tue.nl}{https://orcid.org/0000-0001-8204-1268}{}

 \author{Jari J.\,H. {de Kroon}}{Eindhoven University of Technology, The Netherlands}{j.j.h.d.kroon@tue.nl}{https://orcid.org/0000-0003-3328-9712}{}

 \author{Micha\l{} W\l{}odarczyk}{University of Warsaw, Poland}{m.wlodarczyk@tue.nl}{https://orcid.org/0000-0003-0968-8414}{}

\authorrunning{B.M.P. Jansen, J.J.H. de Kroon, and M. W\l{}odarczyk}

\funding{This project has received funding from the European Research Council (ERC) under the European Union's Horizon 2020 research and innovation programme (grant agreement No 803421, ReduceSearch).\\
\includegraphics[width=4cm]{}} 

\Copyright{Bart M.\,P. Jansen, Jari J. H. de Kroon, and Micha\l{} W\l{}odarczyk}

\begin{CCSXML}
<ccs2012>
   <concept>
       <concept_id>10002950.10003624.10003633.10010917</concept_id>
       <concept_desc>Mathematics of computing~Graph algorithms</concept_desc>
       <concept_significance>500</concept_significance>
       </concept>
   <concept>
       <concept_id>10003752.10003809.10003635</concept_id>
       <concept_desc>Theory of computation~Graph algorithms analysis</concept_desc>
       <concept_significance>500</concept_significance>
       </concept>
   <concept>
       <concept_id>10003752.10003809.10010052</concept_id>
       <concept_desc>Theory of computation~Parameterized complexity and exact algorithms</concept_desc>
       <concept_significance>500</concept_significance>
       </concept>
 </ccs2012>
\end{CCSXML}

\ccsdesc[500]{Mathematics of computing~Graph algorithms}
\ccsdesc[500]{Theory of computation~Graph algorithms analysis}
\ccsdesc[500]{Theory of computation~Parameterized complexity and exact algorithms}

\keywords{fixed-parameter tractability, treewidth, graph decompositions}

\category{} %optional, e.g. invited paper

\relatedversion{} %optional, e.g. full version hosted on arXiv, HAL, or other respository/website
%\relatedversiondetails[linktext={opt. text shown instead of the URL}, cite=DBLP:books/mk/GrayR93]{Classification (e.g. Full Version, Extended Version, Previous Version}{URL to related version} %linktext and cite are optional

\nolinenumbers %uncomment to disable line numbering

%Editor-only macros:: begin (do not touch as author)%%%%%%%%%%%%%%%%%%%%%%%%%%%%%%%%%%
\EventEditors{John Q. Open and Joan R. Access}
\EventNoEds{2}
\EventLongTitle{42nd Conference on Very Important Topics (CVIT 2016)}
\EventShortTitle{CVIT 2016}
\EventAcronym{CVIT}
\EventYear{2016}
\EventDate{December 24--27, 2016}
\EventLocation{Little Whinging, United Kingdom}
\EventLogo{}
\SeriesVolume{42}
\ArticleNo{23}
%%%%%%%%%%%%%%%%%%%%%%%%%%%%%%%%%%%%%%%%%%%%%%%%%%%%%%

\newcommand{\defproblem}[3]{
	\vspace{2mm}
	\noindent\fbox{
		\begin{minipage}{0.96\linewidth}
			\begin{tabular*}{\linewidth}{@{\extracolsep{\fill}}lr} \textsc{#1} &  \\ \end{tabular*}
			{\bf{Input:}} #2 \\
			{\bf{Task:}} #3
		\end{minipage}
	}
	\vspace{2mm}
}

\newcommand{\tw}{\mathrm{\textbf{tw}}}

\newcommand{\ed}{\mathrm{\textbf{ed}}}

\newcommand{\Oh}{\mathcal{O}}
\newcommand{\hh}{\ensuremath{\mathcal{H}}}

\newcommand{\sm}{\setminus}

\newcommand{\hhtw}[1][\hh]{\tw_{#1}}
\newcommand{\hhdepth}[1][\hh]{\ed_{#1}}
\newcommand{\hhtwfull}[1][\hh]{{#1}-treewidth}
\newcommand{\hhdepthfull}[1][\hh]{{#1}-elimination distance}

\newcommand{\hhdel}{\textsc{$\hh$-deletion}\xspace}

\newcommand{\hsepk}{$(\hh,k)$-separation\xspace}

\ifdefined\DEBUG{}
\newcommand{\mic}[1]{{\color{blue}{#1}}}

\def\rem#1{{\marginpar{\raggedright\scriptsize #1}}}
\newcommand{\micr}[1]{\rem{\textcolor{blue}{\(\bullet \) #1}}}

\newcommand{\bmp}[1]{{\color{purple}{#1}}}
\newcommand{\bmpr}[1]{\rem{\textcolor{purple}{\(\bullet \) #1}}}

\newcommand{\jjh}[1]{{\color{orange}{#1}}}
\newcommand{\jjhr}[1]{\rem{\textcolor{orange}{\(\bullet \) #1}}}

\else
\newcommand{\mic}[1]{#1}
\newcommand{\bmp}[1]{#1}
\newcommand{\jjh}[1]{#1}
\newcommand{\micr}[1]{ }
\newcommand{\bmpr}[1]{ }
\newcommand{\jjhr}[1]{ }
\fi

% \pagenumbering{gobble} 

\begin{document}

\maketitle

\begin{abstract}
The notion of $\hh$-treewidth, where~$\hh$ is a hereditary graph class, was recently introduced as a generalization of the treewidth of an undirected graph. Roughly speaking, a graph of $\hh$-treewidth at most~$k$ can be decomposed into (arbitrarily large) $\hh$-subgraphs which interact only through vertex sets of size~$\Oh(k)$ which can be organized in a tree-like fashion. $\hh$-treewidth can be used as a hybrid parameterization to develop fixed-parameter tractable algorithms for \hhdel problems, which ask to find a minimum vertex set whose removal from a given graph~$G$ turns it into a member of~$\hh$. The bottleneck in the current parameterized algorithms lies in the computation of suitable tree $\hh$-decompositions. 

We present FPT approximation algorithms to compute tree $\hh$-decompositions for hereditary and union-closed graph classes~$\hh$. Given a graph of $\mathcal{H}$-treewidth~$k$, we can compute a 5-approximate tree $\mathcal{H}$-decomposition in time~$f(\Oh(k)) \cdot n^{\Oh(1)}$ whenever \textsc{$\mathcal{H}$-deletion} parameterized by solution size can be solved in time~$f(k) \cdot n^{\Oh(1)}$ for some function~$f(k) \geq 2^k$.
The current-best algorithms either achieve an approximation factor of $k^{\Oh(1)}$ or construct optimal decompositions while suffering from non-uniformity with unknown parameter dependence. Using these decompositions, we obtain algorithms solving \textsc{Odd Cycle Transversal} in time~$2^{\Oh(k)} \cdot n^{\Oh(1)}$ parameterized by $\mathsf{bipartite}$-treewidth and \textsc{Vertex Planarization} in time~$2^{\Oh(k \log k)} \cdot n^{\Oh(1)}$ parameterized by $\mathsf{planar}$-treewidth, showing that these can be as fast as the solution-size parameterizations and giving the first ETH-tight algorithms for parameterizations by hybrid width measures.
\end{abstract}

\clearpage

\section{Introduction} \label{sec:intro}

\subparagraph{Background and motivation.} 

Treewidth (see~\cite{Bodlaender98,Diestel12} \cite[\S 7]{CyganFKLMPPS15}) is a width measure for graphs that is ubiquitous in algorithmic graph theory. It features prominently in the Graph Minors series~\cite{RobertsonS90a} and frequently pops up unexpectedly~\cite{Marx20} in the parameterized complexity~\cite{CyganFKLMPPS15,DowneyF13,DBLP:series/txtcs/FlumG06} analysis of NP-hard graph problems on undirected graphs. The notion of treewidth captures how tree-like a graph is in a certain sense; it is defined as the width of an optimal tree decomposition for the graph. Unfortunately, computing an optimal tree decomposition is NP-hard~\cite{ArnborgCP87}. As many of the algorithmic applications of treewidth require a tree decomposition to be able to work, there has been long record of algorithms computing optimal~\cite{ArnborgCP87,Bodlaender96,BodlaenderK96,Reed92} or near-optimal~\cite{BelbasiF22,BodlaenderDDFLP16,Korhonen21} tree decompositions with no end in sight~\cite{KorhonenL22}, as well as a long series of experimental work on heuristically computing good tree decompositions~\cite{BodlaenderK10,BodlaenderK11,DellHJKKR16,DellKTW18}. In this paper, we present a new fixed-parameter tractable approximation algorithm for the notion of $\hh$-treewidth, a generalization of treewidth which has recently attracted significant attention~\cite{AgrawalKLPRSZ22,DBLP:conf/mfcs/EibenGHK19,JansenK21,JansenKW21}. Before describing our contributions for $\hh$-treewidth, we summarize the most important background to motivate the problem.

The popularity of treewidth as a graph parameter can be attributed to the fact that it has very good algorithmic properties (by Courcelle's theorem, any problem that can be formulated in Counting Monadic Second-Order (CMSO$_2$) logic can be solved in linear time on graphs of bounded treewidth~\cite{CourcelleE12}), while also having a very elegant mathematical structure theory. Unfortunately, simple substructures like grids or cliques in a graph can already make its treewidth large. This means that for many input graphs of interest, the treewidth is too large for an approach based on treewidth to be efficient: the running time of many treewidth-based algorithms are of the form~$f(k) \cdot n^{\Oh(1)}$, where~$f$ is an exponential function in the treewidth~$k$ and~$n$ is the total number of vertices of the graph.

Several approaches have been taken to cope with the fact that treewidth is large on graphs with large cliques or large induced grid subgraphs. One approach lies in generalized width measures like cliquewidth or rankwidth~\cite{OumS06}, by essentially replacing the use of separations of small order (which are encoded in tree decompositions), by separations of large order but in which the interactions between the two sides is well-structured. Unfortunately this generality comes at a price in terms of algorithmic applications~\cite{FominGLS09,FominGLS10,FominGLS14}.

This has recently led Eiben, Ganian, Hamm, and Kwon~\cite{DBLP:conf/mfcs/EibenGHK19} to enrich the notion of treewidth in a different way. Consider a hereditary class $\hh$ of graphs, such as bipartite graphs. The notion of $\hh$-treewidth aims to capture how well a graph~$G$ can be decomposed into subgraphs belonging to~$\hh$ which only interact with the rest of the graph via small vertex sets which are organized in a tree-like manner. While we defer formal definitions of $\hh$-treewidth to Section~\ref{sec:prelims}, an intuitive way to think of the concept is the following: a graph $G$ has $\hh$-treewidth at most~$k$ if and only if it can be obtained from a graph $G_0$ with a tree decomposition of width at most~$k$ by the following process: repeatedly insert a subgraph $H_i$ belonging to graph class $\hh$, such that the neighbors of $H_i$ in the rest of the graph are all contained in a single bag of the tree decomposition of $G_0$. The $\hh$-subgraphs~$H_i$ inserted during this process are called \emph{base components} and their neighborhoods have size at most~$k+1$. When $\hh$ is a graph class of unbounded treewidth, like bipartite graphs, the $\hh$-treewidth of a graph can be arbitrarily much smaller than its treewidth. This prompted an investigation of the algorithmic applications of $\hh$-treewidth.

In recent works~\cite{AgrawalKLPRSZ22,DBLP:conf/mfcs/EibenGHK19,JansenKW21}, the notion of $\hh$-treewidth was used to develop new algorithms to solve vertex-deletion problems. Many classic NP-hard problems in algorithmic graph theory can be phrased in the framework of \hhdel{}: find a minimum vertex-subset~$S$ of the input graph~$G$ such that~$G-S$ belongs to a prescribed graph class~$\hh$. Examples include \textsc{Vertex Cover} (where~$\hh$ is the class of edgeless graphs), \textsc{Odd Cycle Transversal} (bipartite graphs), and \textsc{Vertex Planarization} (planar graphs). All these problems are known to be fixed-parameter tractable~\cite{ChenKX10,DBLP:conf/soda/JansenLS14,Kawarabayashi09,MarxS12,ReedSV04} when parameterized by the size of a desired solution: there are algorithms that, given an $n$-vertex graph~$G$ and integer~$k$, run in time~$f(k) \cdot n^{\Oh(1)}$ and output a vertex set~$S \subseteq V(G)$ of size at most~$k$ for which~$G - S \in \hh$, if such a set exists. These algorithms show that large instances whose optimal solutions are small, can still be solved efficiently. Alternatively, since the mentioned graph classes~$\hh$ can be defined in CMSO$_2$, these vertex-deletion problems can be solved in time~$f(w) \cdot n$ parameterized by the treewidth~$w$ of the input graph via Courcelle's theorem, which shows that instances of small treewidth (but whose optimal solutions may be large) can be solved efficiently.

The notion of $\hh$-treewidth (abbreviated as $\hhtw$ from now on) can be used to combine the best of both worlds. It is not difficult to show that if a graph~$G$ has a vertex set~$S$ of size~$k$ for which~$G-S \in \hh$ (we call such a set an $\hh$-deletion set), then the $\hh$-treewidth of~$G$ is at most~$k$: simply take a trivial tree decomposition~$(T,\chi)$ consisting of a single bag of size~$k$ for the graph~$G_0 := G[S]$, so that afterwards the graph~$G$ can be obtained from~$G_0$ by inserting the graph~$H = G-S$, which belongs to~$\hh$ and has all its neighbors in a single bag of~$(T,\chi)$. Since the $\hh$-treewidth of~$G$ is also never larger than its standard treewidth, the \emph{hybrid} (cf.~\cite{AgrawalR22}) parameterization by $\hh$-treewidth dominates both the parameterizations by the solution size and the treewidth of the graph. This raises the question whether existing fixed-parameter tractability results for parameterizations of \hhdel{} by treewidth or solution size, can be extended to $\hhtw$.

It was recently shown~\cite{AgrawalKLPRSZ22} that when it comes to \emph{non-uniform} fixed-parameter tractability characterizations, the answer to this question is positive. If~$\hh$ satisfies certain mild conditions, which is the case for \bmp{all graph classes mentioned so far}, then for each value of~$k$ there exists an algorithm~$\mathcal{A}_{\hh,k}$ that, given a graph~$G$ with~$\hhtw(G) \leq k$ and target value~$t$, decides whether or not~$G$ has an $\hh$-deletion set of size at most~$t$. There is a constant~$c_\hh$ such that each algorithm~$\mathcal{A}_{\hh,k}$ runs in time~$\Oh(n^{c_\hh})$, so that the overall running time can be bounded by~$f(k) \cdot n^{c_\hh}$; however, no bounds on the function~$f$ are given and in general it is unknown how to construct the algorithms whose existence is proven. \bmp{Another recent paper}~\cite{JansenK21} gave concrete FPT algorithms to solve $\hhdel{}$ parameterized by $\hhtw$ for certain cases of~$\hh$, including the three mentioned ones. For example, it presents an algorithm that solves \textsc{Odd Cycle Transversal} in time~$2^{\Oh(k^3)} \cdot n^{\Oh(1)}$, parameterized by $\mathsf{bipartite}$-treewidth. The bottleneck in the latter approach lies in the computation of a suitable tree $\hh$-decomposition: on a graph of~$\hhtw(G) \leq k$, the algorithm runs in~$2^{\Oh(k \log k)} \cdot n^{\Oh(1)}$ time to compute a tree $\hh$-decomposition of width~$w \in \Oh(k^3)$, and then optimally solves \textsc{Odd Cycle Transversal} on the decomposition of width~$w$ in time~$2^{\Oh(w)} \cdot n^{\Oh(1)} \leq 2^{\Oh(k^3)} \cdot n^{\Oh(1)}$. Note that the parameter dependence of this algorithm is much worse than for the parameterizations by solution size and by treewidth, both of which can be solved in single-exponential time~$2^{\Oh(k)} \cdot n^{\Oh(1)}$~\cite{ReedSV04,LokshtanovMS18}. To improve the running times of algorithms for \hhdel based on hybrid parameterizations, improved algorithms are therefore required to compute approximate tree $\hh$-decompositions. These form the subject of our work.

\subparagraph{Our contribution: $\hh$-treewidth.} 

We develop generic FPT algorithms to approximate $\hh$-treewidth, for graph classes $\hh$ which are hereditary and closed under taking the disjoint union of graphs. To approximate $\hh$-treewidth, all our algorithm needs is access to an oracle for solving \hhdel{} parameterized by solution size. The values of the solution size for which the oracle is invoked, will be at most twice as large as the $\hh$-treewidth of the graph we are decomposing. Hence existing algorithms for solution-size parameterizations of \hhdel{} can be used as a black box to form the oracle. Aside from the oracle calls, our algorithm only takes~$8^k\cdot kn(n+m)$ time on an $n$-vertex graph with~$m$ edges. So whenever the solution-size parameterization can be solved in single-exponential time, an approximate tree $\hh$-decomposition can be found in single-exponential time. The approximation factor of the algorithm is~$5$, which is a significant improvement over earlier $\mathsf{poly}(\mathsf{opt})$ approximations running in superexponential time. The formal statement of our main result is the following.

\begin{restatable}{theorem}{restTreewidthAlgorithm}
\label{thm:intro:treewidth}
Let $\hh$ be a hereditary and union-closed class of graphs.
There is an algorithm that,
using oracle-access to an algorithm~$\mathcal{A}$ for \hhdel,
takes as input an $n$-vertex $m$-edge graph $G$, integer $k$, and either computes a tree $\hh$-decomposition of $G$ of width at most $5k+5$ \bmp{consisting of~$\Oh(n)$ nodes,} or correctly concludes that $\hhtw(G) > k$.
The algorithm runs in time $\Oh(8^k\cdot kn(n+m))$, polynomial space, and makes $\Oh(8^k n)$ calls to $\mathcal{A}$ on induced subgraphs of $G$ and parameter $2k+2$.
\end{restatable}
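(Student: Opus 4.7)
The plan is to recursively decompose $G$ using balanced separators, in the spirit of the classical approximation algorithms for treewidth of Robertson--Seymour and Reed. I would maintain the invariant that at each recursive call we are handed an induced subgraph $G[W]$ together with a ``boundary'' $B \subseteq W$ of size at most $5k+5$, which becomes the root bag of the tree $\hh$-decomposition of $G[W]$ produced by the call. Initially $W = V(G)$ and $B = \emptyset$. In each call, I either detect that $G[W \sm B]$ splits as a disjoint union of base components whose neighborhoods lie in $B$ (the leaf case), or I find a small separator and recurse on each piece, attaching the outputs as children of the root bag.

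The core subroutine is a separator-finder that, given $G[W]$ and $B$, either produces an \hsep{2k+2} that balances $W \sm B$ --- a set $S$ of at most $2k+2$ vertices such that each connected component of $G[W \sm (B \cup S)]$ contains at most half of the mass of $W \sm B$ and either lies in $\hh$ or will be recursed into --- or certifies that $\hhtw(G[W]) > k$. The key combinatorial observation is that if $\hhtw(G[W]) \leq k$, then an optimal tree $\hh$-decomposition has a balanced vertex separator consisting of the union of at most two adjacent bags, of total size at most $2(k+1)$. I would realize this separator algorithmically by invoking the \hhdel oracle with parameter $2k+2$ on induced subgraphs obtained by iteratively peeling off confirmed base components, and by enumerating the $\Oh(8^k)$ possible intersection patterns that the sought separator can have with a growing candidate set of size $\Oh(k)$; this enumeration drives both the $8^k$ factor in the running time and the bound on the number of oracle calls.

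For the approximation factor, the new bag produced in a recursive call is $B \cup S$; by arranging that at most $3k+3$ vertices of the parent boundary $B$ are inherited by any single child (while the fresh separator contributes at most $2k+2$), the invariant $|B| \leq 5k+5$ is preserved. Balanced recursion then produces a decomposition with $\Oh(\log n)$ depth and $\Oh(n)$ nodes, and the running time is dominated by calling the separator-finder $\Oh(n)$ times at a per-call cost of $\Oh(8^k \cdot k(n+m))$.

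The main obstacle I anticipate is designing the separator-finding subroutine carefully enough to achieve three constraints simultaneously: the oracle parameter must remain exactly $2k+2$ rather than a larger function of $k$, so that the overall running time matches the oracle on parameter $2k+2$; the returned separator must be simultaneously of size $\leq 2k+2$ \emph{and} vertex-balanced, rather than merely small; and the new boundary passed to each child must stay within $5k+5$. Meeting all three at once forces a tight combinatorial accounting of how the bags of a hypothetical optimal $\hh$-decomposition of $G[W]$ can intersect $G[W]$, and of which boundary vertices are inherited by which children. The $8^k$-style enumeration over intersection patterns is the technical device that converts this combinatorial accounting into an efficient algorithm without blowing up the oracle parameter.
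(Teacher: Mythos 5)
There is a genuine gap: your proposal never supplies a mechanism for \emph{finding} the base components, which is the actual crux of the problem. You write that base components are ``confirmed'' by ``iteratively peeling off'' and that the oracle is invoked ``on induced subgraphs obtained by iteratively peeling off confirmed base components'', but an \hhdel oracle with parameter $2k+2$ can only test membership in $\hh$ or improve a given near-solution --- it cannot search over the exponentially many candidate vertex sets that might form a base component with a small neighborhood. The paper's proof closes exactly this gap with a win/win built around a \emph{maintained $\hh$-deletion set} $X$ (initialized to $V(G)$ and treated as a global variable): for a candidate subset $S'$ of the boundary of size $2k+3$, one computes a minimum $(S',X)$-separator via Ford--Fulkerson. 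If it has size at most $2k+2$, every component on the $S'$-side avoids $X$ and hence lies in $\hh$ by heredity, so a base component can be split off; if instead there are $2k+3$ vertex-disjoint $(S',X)$-paths and $S'$ is weakly $(\hh,k+1)$-separable, then at least $k+2$ of their $X$-endpoints lie inside the covering base component, so exchanging them for its neighborhood of size at most $k+1$ shows the endpoint set is \emph{redundant} in $X$; the oracle call on $G-(X\setminus X')$ with parameter $2k+2$ then strictly shrinks $X$, which can happen at most $n$ times. Nothing in your plan plays the role of $X$ or of this redundancy/exchange argument, and without it the ``leaf case'' of your recursion is not algorithmically realizable.

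A secondary problem is your key combinatorial claim that an optimal tree $\hh$-decomposition yields a \emph{vertex-balanced} separator of size $2(k+1)$ formed by two adjacent bags. For $\hh$-treewidth this is false in general: a leaf bag $\chi(t)$ contains its base component $\chi(t)\cap L$, which can be arbitrarily large and connected, so removing the small part $\chi(t)\setminus L$ can leave more than half the mass in one component. The paper avoids this by balancing only the $(3k+4)$-element boundary set $S$ rather than the vertex mass (Lemma~\ref{lem:split:or:separable}): either $S$ splits as $S_A\cup S_B$ with $|S_A|,|S_B|\le 2k+2$ and $\lambda_G(S_A,S_B)\le k+1$, or some $2k+3$ vertices of $S$ are weakly $(\hh,k+1)$-separable --- precisely the trigger for the win/win above. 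Consequently the recursion is not mass-balanced and does not have depth $\Oh(\log n)$; the $\Oh(n)$ bound on the number of nodes comes instead from charging each branching call to a vertex of $W\setminus S$ that leaves the boundary. Your bag-size accounting ($3k+3$ inherited boundary vertices plus a fresh separator of $2k+2$) is close to the paper's, which keeps the recursive boundary at $3k+3$ and lets only the root bag of each call grow to $5k+6$, but the accounting cannot be completed without first fixing the two issues above.
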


\cref{thm:intro:treewidth} yields the first constant-factor approximation algorithms for $\hhtw$ that run in single-exponential time. For example, for $\hh$ the class of bipartite graphs the running time becomes~$\Oh(72^k\cdot n^2(n+m))$, and for interval graphs we obtain~$\Oh(8^{3k}\cdot n(n+m))$ (\cref{cor:htw:apx} lists results for more classes $\hh$). Combining these approximate decompositions with existing algorithms that solve \hhdel{} on a given tree $\hh$-decomposition, we obtain ETH-tight algorithms as a consequence. \textsc{Odd Cycle Transversal} can be solved in time~$2^{\Oh(k)} \cdot n^{\Oh(1)}$, and \textsc{Vertex Planarization} can be solved in time~$2^{\Oh(k \log k)} \cdot n^{\Oh(1)}$ when parameterized by~$\hhtw$ for~$\hh$ the class of bipartite and planar graphs, respectively, without having to supply a decomposition in the input. For \textsc{Vertex Planarization}, the previous-best bound~\cite{JansenKW21-arxiv} was~$2^{\Oh(k^5 \log k)} \cdot n^{\Oh(1)}$. Note that for the planarization problem, a parameter dependence of~$2^{o(k \log k)}$ is impossible assuming the Exponential Time Hypothesis; this already holds for the larger parameterization by treewidth~\cite{Pilipczuk17}. For \textsc{Odd Cycle Transversal}, an algorithm running in time~$2^{o(n)}$ would violate the Exponential Time Hypothesis, which follows by a simple reduction from \textsc{Vertex Cover} for which such a lower bound is known~\cite[Theorem 14.6]{CyganFKLMPPS15}. This implies that the solution size parameterization cannot be solved in subexponential~time.

Compared to existing algorithms to approximate treewidth, the main obstacle we have to overcome in \cref{thm:intro:treewidth} is identifying the base components of an approximate decomposition in a suitable way. The earlier FPT \bmp{$\mathsf{poly}(\mathsf{opt})$-approximation} for $\hhtw$ effectively reduced the input graph~$G$ to a graph~$G'$ by repeatedly extracting large~$\hh$-subgraphs with small neighborhoods, in such a way that the treewidth of~$G'$ can be bounded in terms of~$\hhtw(G)$, while a tree decomposition of~$G'$ can be lifted into an approximate tree $\hh$-decomposition of~$G$. Several steps in this process led to losses in the approximation factor. \bmp{To obtain our 5-approximation, we} avoid the translation between~$G$ and~$G'$, and work directly on decomposing the input graph~$G$. 

{Our recursive decomposition algorithm works similarly as the Robertson-Seymour 4-approximation algorithm for treewidth~\cite{RobertsonS95} (cf.~\cite[\S 7.6]{CyganFKLMPPS15}). When given a graph~$G$ and integer~$k$ with~$\hhtw(G) \leq k$, the algorithm maintains a vertex set~$S$ of size~$3k+4$ which forms the boundary between the part of the graph that has already been decomposed and the part that still needs to be processed. If~$S$ has a $\frac{2}{3}$-balanced separator~$R$ of size~$k+1$, we can proceed in the usual way: we split the graph based on~$R$, recursively decompose the resulting parts, and combine these decompositions by adding a bag containing~$R \cup S$ as the root. 
\bmp{If~$S$ does not have a balanced separator of size~$k+1$, then we show (modulo some technical details) that for any optimal tree $\hh$-decomposition, there is a subset~$S' \subseteq S$ of~$2k+3$ vertices which belong to a single base component~$H_0$.}
%If~$S$ does not have a balanced separator of size~$k+1$, we show that there is a subset~$S' \subseteq S$ of~$2k+3$ vertices which belong to a common base component~$H_0$ of any optimal tree $\hh$-decomposition. 
%
Our main insight is that such a set~$S'$ can be used in a win/win approach, by maintaining an $\hh$-deletion set~$X$ during the decomposition process that initially contains all vertices. To make progress in the recursion, we would like to split off a base component containing~$S'$ via a separator~$U$ of size at most~$2k+2$, while adding~$U$ to the boundary of the remainder of the graph to be decomposed. To identify an induced $\hh$-subgraph with small neighborhood that can serve as a base component, we compute a minimum~$(S',X)$-separator~$U$ \mic{(we allow $U$ to intersect the sets $S', X$)}.
Any connected component~$H$ of~$G-U$ that contains a vertex from~$S'$ does not contain any vertex of the $\hh$-deletion set~$X$, so~$H$ is an induced subgraph of~$G-X$ which implies~$H \in \hh$ for hereditary~$\hh$. Hence if there is an~$(S',X)$-separator~$U$ of size at most~$2k+2$, we can use it to split off base components neighboring~$U$ that eliminate~$2k+3$ vertices from~$S$ from the boundary, thereby making room to insert~$U$ into the boundary without blowing up its size. Of course, it may be that all~$(S',X)$-separators are larger than~$2k+2$; by Menger's theorem, this happens exactly when there is a family~$\mathcal{P}$ of~$2k+3$ vertex-disjoint~$(S',X)$-paths. Only~$k+1$ paths in~$\mathcal{P}$ can \emph{escape} the base component~$H_0$ covering~$S'$ since its neighborhood has size at most~$k+1$, so that~$k+2$ of them end in a vertex of the deletion set~$X$ that lies in~$H_0$. The key point is now that this situation implies that~$X$ is redundant in a technical sense: if we let~$X'$ denote the endpoints of~$k+2$ $(S',X)$-paths starting and ending in~$H_0$, we can obtain a smaller $\hh$-deletion set by replacing~$X'$ by the neighborhood of~$H_0$, which has size at most~$k+1$. This replacement is valid as long as~$\hh$ is hereditary and union-closed. Using an oracle for \hhdel{} parameterized by solution size, we can therefore efficiently find a smaller~$\hh$-deletion set when \bmp{we know}~$X'$. \bmp{While the algorithm does not know~$X'$ in general, this type of argument leads to the win/win:} either there is a small~$(S',X)$-separator which we can use to split off a base component, or there is a large family of vertex-disjoint~$(S',X)$-paths which allows the $\hh$-deletion set to be improved. As the latter can only happen~$|V(G)|$ times, we must eventually identify a base component to split off, allowing the recursion to proceed.}

%By searching through all subsets~$S'\subseteq S$ and using the information provided by an \hhdel{} oracle, we can eventually split off a suitable base component for the decomposition to allow the recursion to proceed.}
%\micr{todo Bart: elaborate more once outline is gone}

\subparagraph{Our contribution: $\hh$-elimination distance.} 
The $\hh$-elimination distance~$\hhdepth(G)$ of a graph~$G$ is a parameter~\cite{DBLP:journals/algorithmica/BulianD16,DBLP:journals/algorithmica/BulianD17} that extends treedepth~\cite{NesetrilM12} similarly to how $\hh$-treewidth extends treewidth. For hereditary and union-closed classes~$\hh$, the $\hh$-elimination distance of a graph~$G$ is the minimum number of rounds needed to turn~$G$ into a member of~$\hh$, when a round consists of removing one vertex from each connected component. Such an elimination process can be represented by a tree structure called \emph{$\hh$-elimination forest}. Aside from the fact that computing the $\hh$-elimination distance may reveal interesting properties of a graph~$G$, a second motivation for studying this parameter is that it can facilitate \emph{polynomial-space} algorithms for solving \hhdel{}, while the parameterization by $\hhtw$ (which is never larger) typically gives rise to exponential-space algorithms. At a high level, the state of the art for computing $\hhdepth$ is similar as for $\hhtw$: there is an exact non-uniform FPT algorithm with unspecified parameter dependence that works as long as $\hh$ satisfies some mild conditions~\cite{AgrawalKLPRSZ22}, while uniform $\mathsf{poly}(\mathsf{opt})$-approximation algorithms running in time~$2^{k^{\Oh(1)}} \cdot n^{\Oh(1)}$ are known for several concrete graph classes~$\hh$~\cite{JansenKW21}. 

By leveraging similar ideas as for \cref{thm:intro:treewidth}, we also obtain improved FPT-approximation algorithms for $\hhdepth$.
The following theorem gives algorithms for two settings: one for an algorithm using polynomial space whenever the algorithm $\mathcal{A}$ for \hhdel does, which is the case for most of the considered graph classes, and one for an exponential-space algorithm with a better approximation ratio.
\newpage

\begin{restatable}{theorem}{restTreedepthAlgorithm}
\label{thm:intro:elimination-distance}
Let $\hh$ be a hereditary and union-closed class of graphs.
There exists an algorithm that, using oracle-access to an algorithm~$\mathcal{A}$ for \hhdel, takes as input an $n$-vertex  graph $G$ and integer $k$, runs in time $n^{\Oh(1)}$,
makes $n^{\Oh(1)}$ calls to $\mathcal{A}$ on induced subgraphs of $G$ and parameter $2k$, and
either concludes that $\hhdepth(G) > k$ or outputs an $\hh$-elimination forest of depth $\Oh(k^3 \log^{3/2}k)$. 

Under the same assumptions, there is an algorithm that runs in time $2^{\Oh(k^2)}\cdot n^{\Oh(1)}$, makes $n^{\Oh(1)}$ calls to $\mathcal{A}$ on induced subgraphs of $G$ and parameter $2k$, and
either concludes that $\hhdepth(G) > k$ or outputs an $\hh$-elimination forest of depth $\Oh(k^2)$.
\end{restatable}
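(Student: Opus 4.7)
The approach adapts techniques for approximating classical treedepth — in particular, the framework of Czerwi\'nski, Nadara, and Pilipczuk — to the $\hh$-elimination setting, leveraging \cref{thm:intro:treewidth} as a subroutine and reusing the win-win scheme from its proof.

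I would build the $\hh$-elimination forest recursively, top-down. A recursive subproblem is an induced subgraph $G' \subseteq G$ with promise $\hhdepth(G') \leq k$ together with a tentative $\hh$-deletion set $X$. The goal of one call is to find a separator $U \subseteq V(G')$ of size $O(k)$ whose removal either carves off an induced $\hh$-subgraph that attaches as a base-component leaf below $U$, or splits $G' - U$ into connected pieces on which the promised depth strictly decreases, allowing recursion. To find such $U$, I would first invoke \cref{thm:intro:treewidth} — valid since $\hhtw(G') \leq \hhdepth(G') \leq k$ — to obtain a tree $\hh$-decomposition of width $5k+5$. Then I would reuse the $(S',X)$-separator argument from the proof of \cref{thm:intro:treewidth}: either Menger's theorem produces the desired small separator, or the oracle for \hhdel yields a strictly smaller $\hh$-deletion set, which can happen only $n$ times overall. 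In the polynomial-time variant, I would combine this with $\tfrac{2}{3}$-balanced separators drawn from the tree $\hh$-decomposition, each contributing $O(k)$ vertices per recursion level; a Czerwi\'nski--Nadara--Pilipczuk-style analysis on the core (non-base) part of the graph — whose structural depth is bounded by the promise $k$ — shows that $O(k^{2}\sqrt{\log k})$ such levels suffice to exhaust the promise, giving total depth $O(k^{3}\log^{3/2}k)$ after multiplying by the $O(k)$ cost per level. For the $2^{O(k^{2})}$-time variant, I would instead enumerate, within the $O(k)$-sized bags reached along a root-to-leaf path, all $2^{O(k^{2})}$ candidate separators of size $O(k)$; this exhaustive search lets me force a strict $k \to k-1$ drop in the promise and thus an $O(k^{2})$ depth bound.

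The main obstacle is verifying that after removing $U$, the promise transfers correctly, i.e., $\hhdepth(C) < k$ for each component $C$ of $G' - U$. The optimal elimination forest of $G'$ may align badly with the approximate tree $\hh$-decomposition, splitting a would-be base component between the elimination chain and a leaf; to recognize valid base components I must invoke both the hereditary and union-closed properties of $\hh$, the latter to re-glue fragments of base components that end up in the same recursive subproblem. A second technical hurdle is controlling the total number of oracle calls across the recursion: this is handled, as in \cref{thm:intro:treewidth}, by a global potential argument showing that the current $\hh$-deletion set can strictly shrink only $n$ times, so the $n^{O(1)}$ oracle budget suffices throughout. Polynomial space in the first variant is inherited by performing the recursion depth-first and never materializing the whole forest at once, which is fine as long as the oracle $\mathcal{A}$ itself uses polynomial space.
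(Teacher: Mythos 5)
You have correctly isolated the one genuinely new ingredient needed here: the win/win in which either a small $(Z,X)$-separator is found (whose near side is then an induced subgraph of $G-X$ and hence an $(\hh,2k)$-separation weakly covering $Z$), or Menger gives $2k+1$ disjoint $(Z,X)$-paths whose $X$-endpoints are redundant by \cref{lem:subset:redundant}, so the oracle shrinks $X$; since $X$ shrinks at most $n$ times this terminates. This is exactly the content of \cref{lem:separation-finding:algorithm}. However, the paper does \emph{not} rebuild the elimination-forest construction around this subroutine: it packages the subroutine as a 2-approximation for the \textsc{$\hh$-Weak Coverage} problem and then invokes, as a black box, the reduction of \cref{lem:separation-finding:motivation} (Lemma 3.3 of the earlier work of Jansen, de Kroon, and W\l{}odarczyk), which already converts any $\Oh(1)$-approximate weak-coverage oracle into an elimination forest of depth $\Oh(k^3\log^{3/2}k)$ in polynomial time, or $\Oh(k^2)$ in time $2^{\Oh(k^2)}\cdot n^{\Oh(1)}$. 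Your proposal instead tries to re-derive that reduction from scratch, and this is where the gaps lie.

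Concretely: (i) calling \cref{thm:intro:treewidth} inside each recursive step costs $\Oh(8^k\cdot kn(n+m))$ time, which already violates the $n^{\Oh(1)}$ running-time bound of the first variant (the exponential dependence on $k$ must be confined entirely to the oracle $\mathcal{A}$, which is why the paper's weak-coverage routine runs in $\Oh(kn(n+m))$ plus at most $n$ oracle calls); it would also make oracle calls with parameter $2k+2$ rather than $2k$. (ii) The step you yourself flag as the main obstacle is not resolved and is in fact false as stated: removing a balanced separator $U$ of size $\Oh(k)$ drawn from a tree $\hh$-decomposition does not guarantee $\hhdepth(C)<\hhdepth(G')$ for the components $C$ of $G'-U$, so the ``promise transfer'' that drives your recursion has no justification. (iii) The depth accounting is asserted rather than proved, and does not even match arithmetically: $\Oh(k^2\sqrt{\log k})$ levels times $\Oh(k)$ vertices per level gives $\Oh(k^3\sqrt{\log k})$, while the $\log^{3/2}k$ factor in the theorem actually originates from the Czerwi\'nski--Nadara--Pilipczuk treedepth approximation used \emph{inside} the cited reduction, not from a level count. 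To repair the proposal, replace the entire recursive scaffolding by the citation to \cref{lem:separation-finding:motivation} and present your win/win step as the required weak-coverage oracle.
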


In the previous work~\cite{JansenKW21, JansenKW21-arxiv} such a dependence on $k$ was possible only in two cases: when $\hh$ is the class of bipartite graphs or when $\hh$ is defined by a finite family of forbidden induced subgraphs.
In the general case, our result effectively shaves off a single $k$-factor in the depth of a returned decomposition \bmp{and in the exponent of the running time}, compared to the previously known approximations.
\cref{thm:intro:elimination-distance} entails better approximation algorithms for  $\hh$-elimination distance for classes of e.g. chordal, interval, planar, bipartite permutation, or distance-hereditary graphs.

\subparagraph{Organization.} The remainder of the paper is organized as follows. We continue by presenting
formal preliminaries in Section~\ref{sec:prelims}. In Section~\ref{sec:treewidth} we treat $\hh$-treewidth, developing the theory and subroutines needed to prove \cref{thm:intro:treewidth}.
The list of its applications for concrete graph classes is provided in \cref{sec:applications}. The proof of \cref{thm:intro:elimination-distance} is presented in \cref{sec:treedepth}. 
We conclude in \cref{sec:conclusion}.

\section{Preliminaries} \label{sec:prelims}

\subparagraph{Graphs and graph classes.}
We consider \bmp{finite}, simple, undirected graphs. \bmp{We denote} the vertex and edge sets of a graph\bmp{~$G$}  by $V(G)$ and $E(G)$ respectively, with $|V(G)| = n$ and $|E(G)| = m$. For a set of vertices $S \subseteq V(G)$, by $G[S]$ we denote the graph induced by $S$. We use shorthand $G-v$ and $G-S$ for $G[V(G) \setminus \{v\}]$ and $G[V(G) \setminus S]$, respectively. The open neighborhood $N_G(v)$ of $v \in V(G)$ is defined as $\{u \in V(G) \mid uv \in E(G)\}$. The closed neighborhood of~$v$ \bmp{is} $N_G[v] = N_G(v) \cup \{v\}$. For $S \subseteq V(G)$, we have $N_G[S] = \bigcup_{v \in S} N_G[v]$ and $N_G(S) = N_G[S] \setminus S$.
\mic{We define the boundary $\partial_G(S)$ of the vertex set $S$ as $N_G(V(G) \sm S)$, \bmp{i.e., those vertices of~$S$ which have a neighbor outside~$S$}.}

A class of graphs $\hh$ is called \emph{hereditary} if for \bmp{any} $G \in \hh$, every induced subgraph of $G$ also belongs to $\hh$.
Furthermore, $\hh$ is \emph{union-closed} if for \bmp{any} $G_1, G_2 \in \hh$ the \bmp{disjoint union} of $G_1$ and $G_2$ also belongs to $\hh$.
For a graph class $\hh$ and a graph $G$, a set $X \subseteq V(G)$ is called an $\hh$-deletion set \bmp{in~$G$} if $G-X \in \hh$.
For a graph class $\hh$, the parameterized problem \hhdel takes a graph $G$ and parameter $k$ \bmp{as input}, and either outputs \bmp{a} minimum-size $\hh$-deletion set in $G$ or reports that there is no such set of size at most $k$. 

\subparagraph{Separators.}

%\bmp{We proceed by introducing notions concerning separators which will be crucial for the branching steps of our algorithms.}
For two (not necessarily disjoint) sets $X,Y \subseteq V(G)$ in a graph~$G$, a set $P \subseteq V(G)$ is an $(X,Y)$-separator if no connected component of $G-P$ contains a vertex from both \bmp{$X \setminus P$} and \bmp{$Y \setminus P$}. Such a separator may intersect $X \cup Y$. \bmp{Equivalently,~$P$ is an~$(X,Y)$-separator} if each $(X,Y)$-path contains a vertex of $P$. The minimum cardinality of such a separator is denoted $\lambda_G(X,Y)$. \bmp{By Menger's theorem,~$\lambda_G(X,Y)$ is equal to the maximum cardinality of a set of pairwise vertex-disjoint $(X,Y)$-paths.}
%If we require that an $(X,Y)$-separator $S$ is contained in $V(G) \setminus (X \cup Y)$ then we call it a restricted separator.
%In the following notion of important separator, we consider the sets $X$ and $Y$ to be undeletable and hence only consider important restricted separators (the definition in~\cite{DBLP:books/sp/CyganFKLMPPS15} allows an arbitrary undeletable set of vertices). 
A pair $(A,B)$ of subsets of $V(G)$ is a \emph{separation} in~$G$ if $A \cup B = V(G)$ and $G$ has no edges between $A \setminus B$ and $B \setminus A$. \jjh{Its order is defined as $|A \cap B|$.}
%Recall that $\lambda_G(X,Y)$ stands for the size of a minimum unrestricted $(X,Y)$-separator~in~$G$.

\begin{observation} \label{obs:separation:lambda}
For two sets $X,Y \subseteq V(G)$, it holds that $\lambda_G(X,Y) \le k$ if and only if there exists a separation $(A,B)$ in $V(G)$ such that $X \subseteq A$, $Y \subseteq B$, and {$|A \cap B| \le k$}.
\end{observation}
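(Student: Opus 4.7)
The plan is to prove both implications by a direct construction, using the definitions of $(X,Y)$-separator and separation essentially verbatim.

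For the forward direction, assume $\lambda_G(X,Y) \le k$ and pick an $(X,Y)$-separator $P$ of size at most $k$. The idea is to split $G-P$ into an $X$-side and a $Y$-side and then add $P$ to both. Concretely, let $C_X$ be the union of the vertex sets of all connected components of $G-P$ that meet $X \setminus P$, and let $C_Y := V(G) \setminus (C_X \cup P)$, so $C_Y$ contains all components meeting $Y \setminus P$ (by the separator property, no component meets both). Set $A := C_X \cup P$ and $B := C_Y \cup P$. Then $A \cup B = V(G)$, $A \cap B = P$, and there is no edge between $A \setminus B = C_X$ and $B \setminus A = C_Y$ because any such edge would yield a component of $G-P$ straddling both sides. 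Moreover $X \subseteq A$ (vertices of $X$ lie in $P$ or in $C_X$) and $Y \subseteq B$ analogously, so $(A,B)$ is the desired separation of order at most $k$.

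For the backward direction, assume such a separation $(A,B)$ exists and set $P := A \cap B$. I claim $P$ is an $(X,Y)$-separator, which immediately yields $\lambda_G(X,Y) \le |P| \le k$. Take any $(X,Y)$-path $v_0, v_1, \ldots, v_\ell$ and suppose for contradiction that it avoids $P$. Since $V(G) = A \cup B$, every $v_i$ lies in $(A \setminus B) \cup (B \setminus A)$, and since by the separation property there are no edges between these two sides, all $v_i$ must lie on the same side. But $v_0 \in X \subseteq A$ and $v_0 \notin P$ force $v_0 \in A \setminus B$, while $v_\ell \in Y \subseteq B$ and $v_\ell \notin P$ force $v_\ell \in B \setminus A$, a contradiction.

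Neither direction looks technically hard; the only subtlety worth spelling out carefully is that $X$ and $Y$ may intersect $P$ (separators and separations are allowed to contain terminal vertices), which is exactly why the argument above handles the cases $v_0 \in P$ and $v_\ell \in P$ trivially (they produce a vertex of $P$ on the path immediately). This matches the convention for separators stated just above the observation.
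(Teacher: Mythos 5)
Your proof is correct, and it is the standard argument: the paper states this as an unproven observation (it is the textbook equivalence between vertex separators and separations), so there is no proof in the paper to compare against. Both directions are handled properly, including the subtlety that $X$, $Y$ may intersect $P=A\cap B$, and the degenerate case of a single-vertex path through $X\cap Y$ is covered by your contradiction $v_0\in A\setminus B$ versus $v_\ell\in B\setminus A$.
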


%The following fact follows from \cref{thm:prelim:ford} and \cref{obs:prelim:restricted-translate}.

\bmp{The following theorem summarizes how, given vertex sets~$X, Y \subseteq V(G)$ and a bound~$k$, we can algorithmically find a small-order separation or \mic{a} large system of vertex-disjoint paths. The statement follows from the analysis of the Ford-Fulkerson algorithm for maximum~$(X,Y)$-flow in which each vertex has a capacity of~$1$.  If the algorithm has not terminated within~$k$ iterations, then the flow of value~$k+1$ yields~$k+1$ vertex-disjoint paths. If it terminates earlier, a suitable separation can be identified based on reachability in the residual network of the last iteration.}

\begin{theorem}[Ford-Fulkerson, {see \cite[Thm. 8.2]{CyganFKLMPPS15} and \cite[\S 9.2]{Schrijver03}}]\label{cor:subset:ford}
There is an algorithm that, given an $n$-vertex $m$-edge graph $G$, sets $X,Y \subseteq V(G)$, and integer $k$, runs in time $\Oh(k(n+m))$ and determines whether $\lambda_G(X,Y) \le k$.
If so, the algorithm also returns \bmp{a separation~$(A,B)$ in~$G$ with~$X \subseteq A$,~$Y\subseteq B$, and~$|A \cap B| \leq k$}. %\micr{can we get a linkage of size $k+1$ when $\lambda_G(X,Y) > k$?}
\mic{Otherwise, the algorithm returns a family of $k+1$ vertex-disjoint $(X,Y)$-paths.}
\end{theorem}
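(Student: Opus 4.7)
This is a standard application of Ford--Fulkerson with unit vertex capacities, which I would prove by reducing to unit-capacity max-flow. First I would construct an auxiliary directed network $G'$ by the textbook vertex-splitting gadget: every $v \in V(G)$ becomes two vertices $v^{\mathrm{in}}, v^{\mathrm{out}}$ joined by a unit-capacity arc, every edge $uv \in E(G)$ becomes two infinite-capacity arcs $u^{\mathrm{out}} \to v^{\mathrm{in}}$ and $v^{\mathrm{out}} \to u^{\mathrm{in}}$, a super-source $s$ has infinite-capacity arcs to each $x^{\mathrm{in}}$ with $x \in X$, and a super-sink $t$ has infinite-capacity arcs from each $y^{\mathrm{out}}$ with $y \in Y$. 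Integral $(s,t)$-flows in $G'$ correspond bijectively to systems of vertex-disjoint $(X,Y)$-paths in $G$ after contracting each arc $v^{\mathrm{in}} v^{\mathrm{out}}$, and by Menger's theorem the maximum flow value equals $\lambda_G(X,Y)$. Crucially, the unit capacity is placed on every vertex, including those in $X \cup Y$, so the model correctly captures the paper's convention that separators may intersect $X \cup Y$.

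Next I would run at most $k+1$ Ford--Fulkerson augmenting rounds, each finding an augmenting $(s,t)$-path in the residual graph by a single BFS in $\Oh(n+m)$ time, so the total running time is $\Oh(k(n+m))$. If $k+1$ augmentations succeed, then since the splitting arcs have capacity $1$, integer flow decomposition provides $k+1$ internally vertex-disjoint $(s,t)$-paths in $G'$; contracting each $v^{\mathrm{in}} v^{\mathrm{out}}$ arc yields the required family of $k+1$ vertex-disjoint $(X,Y)$-paths in $G$. Otherwise some round fails to find an augmenting path, certifying that the current flow is a maximum one of value at most $k$.

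In the failure case I would read off the separation from the final residual graph. Let $R \subseteq V(G')$ be the set of vertices reachable from $s$ in the residual graph and set $C = \{v \in V(G) : v^{\mathrm{in}} \in R,\ v^{\mathrm{out}} \notin R\}$; because the only finite-capacity arcs are the splitting arcs, the max-flow/min-cut theorem gives $|C| = $ flow value $\le k$. Define $A = \{v : v^{\mathrm{in}} \in R\}$ and $B = (V(G) \setminus A) \cup C$. Then $A \cap B = C$ has size at most $k$; $X \subseteq A$ because each $x^{\mathrm{in}}$ is directly reachable from $s$; and $Y \subseteq B$ because $y^{\mathrm{out}} \in R$ would yield a residual $s$-$t$ path contradicting maximality. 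Finally, any edge $uv$ of $G$ with $u \in A \setminus B$ and $v \in B \setminus A$ would correspond to an infinite-capacity residual arc $u^{\mathrm{out}} \to v^{\mathrm{in}}$ forcing $v^{\mathrm{in}} \in R$, contradicting $v \notin A$; hence $(A,B)$ is a valid separation, matching the characterization in Observation~\ref{obs:separation:lambda}. The only mildly delicate point---not a real obstacle---is keeping this bookkeeping consistent when $X \cap Y \ne \emptyset$ or when vertices of $X \cup Y$ end up in $C$, which the unit-capacity gadget handles uniformly without case distinctions.
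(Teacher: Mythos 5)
Your proposal is correct and follows essentially the same route the paper takes: the paper only cites this as a known result and sketches exactly this argument (Ford--Fulkerson on the unit-vertex-capacity network, with $k+1$ augmentations yielding the disjoint paths and residual reachability yielding the separation), which your vertex-splitting construction and cut analysis carry out in full detail. The handling of separators intersecting $X \cup Y$ via the capacity-1 splitting arcs is exactly the right way to match the paper's convention.
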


\subparagraph{\hhtwfull{}.} 
\mic{We continue by giving a formal definition
of a tree $\hh$-decomposition.}
%for the hybrid graph measures we employ.} 

\begin{definition} \label{def:tree:h:decomp}
For a graph class $\hh$, a tree $\hh$-decomposition of graph $G$ is a triple $(T, \chi, L)$ where~$L \subseteq V(G)$,~$T$ is a rooted tree, and~$\chi \colon V(T) \to 2^{V(G)}$, such that:
\begin{enumerate}
    \item For each~$v \in V(G)$ the nodes~$\{t \mid v \in \chi(t)\}$ form a {non-empty} connected subtree of~$T$. \label{item:tree:h:decomp:connected}
    \item For each edge~$uv \in E(G)$ there is a node~$t \in V(T)$ with~$\{u,v\} \subseteq \chi(t)$.
    %\item For each vertex~$v \in L$, there is a unique leaf~$t$ in $T$ for which~$v \in \chi(t)$. %  where~$t$ is a leaf~of~$T$. \label{item:tree:h:decomp:unique}
    \item For each vertex~$v \in L$, there is a unique~$t \in V(T)$ with~$v \in \chi(t)$, and~$t$ is a leaf of~$T$.\label{item:tree:h:decomp:unique}
    \item For each node~$t \in V(T)$, the graph~$G[\chi(t) \cap L]$ belongs to~$\hh$. \label{item:tree:h:decomp:base}
\end{enumerate}

{The \emph{width} of a tree $\hh$-decomposition is defined as~$\max(0, \max_{t \in V(T)} |\chi(t) \setminus L| - 1)$.} The $\hh$-treewidth of a graph~$G$, denoted~$\hhtw(G)$, is the minimum width of a tree $\hh$-decomposition of~$G$.
The connected components of $G[L]$ are called base components.

{A pair~$(T, \chi)$ is a (standard) \emph{tree decomposition} if~$(T, \chi, \emptyset)$ satisfies all conditions of an $\hh$-decomposition; the choice of~$\hh$ is irrelevant.}
\end{definition}

For a rooted tree decomposition $(T,\chi)$, $T_t$ denotes the subtree of $T$ rooted at $t \in V(T)$, while $\chi(T_t) = \bigcup_{x \in V(T_t)} \chi(x)$. Similarly as treewidth, $\hh$-treewidth is a monotone parameter with respect to taking induced subgraphs.

\begin{observation}\label{obs:tree:induced}
Let $\hh$ be a hereditary class of graphs, $G$ be a graph, and $H$ be an induced subgraph of $G$.
Then $\hhtw(H) \le \hhtw(G)$.
\end{observation}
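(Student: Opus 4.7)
The plan is to construct a tree $\hh$-decomposition of $H$ by restricting an optimal tree $\hh$-decomposition $(T, \chi, L)$ of $G$ to the vertex set of $H$. Concretely, I would define $\chi'(t) := \chi(t) \cap V(H)$ for every $t \in V(T)$, set $L' := L \cap V(H)$, and claim that $(T, \chi', L')$ is a tree $\hh$-decomposition of $H$ of width at most $\hhtw(G)$.

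I would then verify each of the four conditions of \cref{def:tree:h:decomp} in turn. For condition~\ref{item:tree:h:decomp:connected}, observe that for a vertex $v \in V(H)$ the set $\{t \mid v \in \chi'(t)\}$ equals $\{t \mid v \in \chi(t)\}$ (restricting the codomain has no effect on membership), and the latter is a non-empty connected subtree by assumption. Condition~2 is immediate since $H$ is an \emph{induced} subgraph, so each edge of $H$ is an edge of $G$ and hence already covered by some bag of $\chi$, whose restriction then covers it in $\chi'$. For condition~\ref{item:tree:h:decomp:unique}, any $v \in L'$ lies in $L$, and the unique leaf bag of $T$ containing $v$ in $\chi$ remains the unique bag containing $v$ in $\chi'$. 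For condition~\ref{item:tree:h:decomp:base}, note that
\[
H[\chi'(t) \cap L'] \;=\; H[\chi(t) \cap L \cap V(H)] \;=\; G[\chi(t) \cap L \cap V(H)],
\]
where the second equality uses that $H$ is induced in $G$; this is an induced subgraph of $G[\chi(t) \cap L] \in \hh$, so hereditariness of $\hh$ places it in $\hh$.

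Finally, for the width bound, I would observe
\[
\chi'(t) \setminus L' \;=\; \bigl(\chi(t) \cap V(H)\bigr) \setminus \bigl(L \cap V(H)\bigr) \;\subseteq\; \chi(t) \setminus L,
\]
so $|\chi'(t) \setminus L'| \le |\chi(t) \setminus L|$ for every $t$, which gives width at most $\hhtw(G)$ and hence $\hhtw(H) \le \hhtw(G)$. There is no real obstacle here; the only point requiring the hypothesis that $\hh$ is hereditary is condition~\ref{item:tree:h:decomp:base}, and the only point requiring $H$ to be an induced (rather than arbitrary) subgraph is the identification of $H[\chi'(t) \cap L']$ with the corresponding induced subgraph of $G$. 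Union-closedness of $\hh$ is not needed.
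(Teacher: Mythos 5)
Your proof is correct and is exactly the standard restriction argument (intersect every bag and the set $L$ with $V(H)$, keep the tree) that the paper leaves implicit by stating this as an observation without proof. All four conditions of \cref{def:tree:h:decomp} and the width bound are verified correctly, and you rightly note that only hereditariness of $\hh$ is needed.
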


\section{\texorpdfstring{Approximating $\hh$-treewidth}{Approximating H-treewidth}} \label{sec:treewidth}

\mic{We make preparations for the proof of \cref{thm:intro:treewidth}.
First, we formalize the concept of a~potential base component using the notion of an $(\hh,\ell)$-{separation} and relate it to {\em redundant} subsets in a solution to \hhdel{}.
Next, we prove a counterpart of the balanced-separation property for graphs of bounded \hhtwfull{} and explain how it allows us to apply a win/win approach in a single step of the decomposition algorithm.}

\subsection{\texorpdfstring{Redundancy and $(\hh,\ell)$-separations}{Redundancy and (H,l)-separations}}

%We set the stage for the proof of \cref{thm:intro:treewidth}.
We summon the following concept from the previous work on $\hh$-treewidth~\cite{JansenKW21} \bmp{to capture $\hh$-subgraphs with small neighborhoods.}

\begin{definition} \label{def:h:ell:sep}
For disjoint $C, S \subseteq V(G)$, the pair $(C,S)$ is called an $(\hh,\ell)$-{separation} in~$G$~if (1) $G[C] \in \hh$, (2) $|S| \le \ell$, and (3) $N_G(C) \subseteq S$.
\end{definition}

\bmp{This notion is tightly connected to the base components of tree $\hh$-decompositions. For any tree $\hh$-decomposition~$(T,\chi,L)$ of \mic{width~$k$} of a graph~$G$, for any node~$t \in T$, the graph~$G[\chi(t) \cap L]$ belongs to~$\hh$ so that~$C := \chi(t) \cap L$ satisfies \cref{def:h:ell:sep}. The open neighborhood of~$\chi(t) \cap L$ is a subset of~$S := \chi(t) \setminus L$, which follows from the fact that vertices of~$L$ only occur in a single bag, while each edge has both endpoints covered by a single bag. Since~$|S| \leq k+1$ by definition of the width of a tree $\hh$-decomposition, this leads to the following observation.}
%\micr{I suggest to always use $k$ for width and $\ell$ for $|S|$}

\begin{observation}\label{obs:tree:secluded}
Let $(T, \chi, L)$ be a tree $\hh$-decomposition of a graph $G$ of \mic{width~$k$}. \bmp{For each node~$t \in V(T)$, % we have $N_G(\chi(t) \cap L) \subseteq \chi(t) \setminus L$ and hence 
the pair~$(\chi(t) \cap L, \chi(t) \setminus L)$ is an~$(\hh,k+1)$-separation in~$G$.}
\end{observation}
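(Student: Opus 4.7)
The plan is to directly verify the three conditions of \cref{def:h:ell:sep} for the pair $(C, S) := (\chi(t) \cap L,\; \chi(t) \setminus L)$. First I observe that $C$ and $S$ are disjoint by construction, so it suffices to check: (1) $G[C] \in \hh$, (2) $|S| \leq k+1$, and (3) $N_G(C) \subseteq S$. Conditions (1) and (2) are immediate from the definition of a tree $\hh$-decomposition and its width: condition (1) is exactly condition~\ref{item:tree:h:decomp:base} of \cref{def:tree:h:decomp}, and condition (2) follows because $|\chi(t) \setminus L| - 1 \leq k$ by the width bound.

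The only substantive step is verifying the neighborhood condition (3). The plan is to fix an arbitrary vertex $v \in C = \chi(t) \cap L$ together with an arbitrary neighbor $u \in N_G(v)$, and to show $u \in C \cup S = \chi(t)$. By condition~\ref{item:tree:h:decomp:connected} of \cref{def:tree:h:decomp} applied to the edge $uv \in E(G)$, there exists some node $t' \in V(T)$ with $\{u,v\} \subseteq \chi(t')$. But $v \in L$, so by condition~\ref{item:tree:h:decomp:unique} the node $t'$ with $v \in \chi(t')$ is unique, forcing $t' = t$ and hence $u \in \chi(t)$. This already shows $N_G(C) \subseteq \chi(t) = C \cup S$; combined with the fact that $N_G(C)$ is disjoint from $C$ (open neighborhood), we get $N_G(C) \subseteq (\chi(t)) \setminus C$.

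It remains to argue that $N_G(C)$ actually avoids $L$, i.e., that a neighbor $u$ as above cannot lie in $L \setminus C$. If $u \in L$, then condition~\ref{item:tree:h:decomp:unique} says there is a unique bag containing $u$, which must therefore be $t$ (we just showed $u \in \chi(t)$). Hence $u \in \chi(t) \cap L = C$, contradicting $u \notin C$. Therefore $u \in \chi(t) \setminus L = S$, completing the verification of (3).

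There is no real obstacle here: the argument is a direct unpacking of the axioms of \cref{def:tree:h:decomp}, with the key leverage coming from the uniqueness axiom for vertices of $L$, which both forces edge-witnessing bags to coincide with the unique bag containing each base-component vertex and prevents any out-of-$C$ neighbor from remaining in $L$.
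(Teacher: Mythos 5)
Your proof is correct and takes essentially the same route as the paper's own (informal) justification preceding the observation: $G[C]\in\hh$ and $|S|\le k+1$ are immediate from the definition of width, and $N_G(C)\subseteq S$ follows by combining the edge-covering condition with the uniqueness of the bag containing each vertex of $L$. One cosmetic slip: the edge-covering property you invoke is the (unlabeled) second condition of the definition of a tree $\hh$-decomposition, not the connectivity condition that your reference points to.
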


The following concept will be useful when working with~$(\hh,\ell)$-separations.

\begin{definition}
For an \hsepk $(C, S)$ and set $Z \subseteq V(G)$,
we say that $(C,S)$ {covers} $Z$ if $Z \subseteq C$, \bmp{or {weakly covers} $Z$ if $Z \subseteq C \cup S$.}
Set $Z \subseteq V(G)$ is called \bmp{(weakly)} $(\hh, \ell)$-{separable} if
there exists an $(\hh, \ell)$-{separation} that \bmp{(weakly)} covers $Z$.
%Otherwise $Z$ is \bmp{(weakly)} $(\hh, \ell)$-{inseparable}.
\end{definition}

\mic{We introduce both notions to  keep consistency with the earlier work~\cite{JansenKW21} but in fact we will be interested only in weak coverings.}
Following the example above, the set $Z = \chi(t)$ is weakly $(\hh, k+1)$-{separable} but not necessarily $(\hh, k+1)$-{separable}.

Next, we introduce the notion of redundancy for solutions to \hhdel. %\micr{unless this is mentioned in the intro, explain the connection to local search?}

\begin{definition}
 For an $\hh$-deletion set $X$ in $G$ we say that a subset $X' \subseteq X$ is \emph{redundant} in $X$ if there exists a set $X'' \subseteq V(G)$ smaller than $|X'|$ such that $(X \setminus X') \cup X''$ is also an $\hh$-deletion set in $G$.    
\end{definition} 

\mic{We remark that redundancy has been studied in the context of local-search strategies (cf.~\cite{GuoHNS13,GuoHK14}).}
\bmp{It is known that for \textsc{Vertex Cover} finding a redundant subset $X'$ in a solution $X$ is FPT in graphs of bounded local treewidth but W[1]-hard in general, when parameterized by the size of $|X'|$~\cite{FellowsFLRSV12}.}
%For example, given a graph~$G$, a vertex cover~$X \subseteq V(G)$, and a parameter~$k$ that governs the size of the local search neighborhood, it is $W[1]$-hard~\cite[Theorem 8]{FellowsFLRSV12} to determine whether there is a vertex cover which is smaller than~$X$ and has Hamming distance at most~$k$ from~$X$.
However, when $X'$ is given, one can easily check whether it is redundant using an algorithm for \hhdel parameterized by the solution size, due to the following observation.

\begin{observation}
 Let $X$ be an $\hh$-deletion set in a graph $G$.
 A subset $X' \subseteq X$ is redundant in $X$ if and only if the graph $G - (X \sm X')$ has an  $\hh$-deletion set smaller than $|X'|$.
\end{observation}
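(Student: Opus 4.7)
The plan is to establish the equivalence by unfolding the definition of redundancy in both directions and translating between $\hh$-deletion sets of $G$ that contain $X \sm X'$ and $\hh$-deletion sets of the induced subgraph $G - (X \sm X')$. The key identity driving both directions is that $G - ((X \sm X') \cup Z) = (G - (X \sm X')) - Z$ whenever $Z$ is disjoint from $X \sm X'$, so deletion sets of $G$ of the form ``extension of $X \sm X'$'' correspond naturally to deletion sets of the subgraph $G - (X \sm X')$.

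For the forward direction, I would start from a witness $X''$ of redundancy, i.e., a set of size strictly less than $|X'|$ such that $(X \sm X') \cup X''$ is an $\hh$-deletion set in $G$. The subtle point is that $X''$ may overlap with $X \sm X'$ and hence need not be contained in $V(G - (X \sm X'))$; I would resolve this by passing to $Y := X'' \sm (X \sm X')$ and observing that $(X \sm X') \cup Y = (X \sm X') \cup X''$, so $Y$ is an $\hh$-deletion set in $G - (X \sm X')$ whose size is at most $|X''| < |X'|$. For the converse, given an $\hh$-deletion set $Y$ in $G - (X \sm X')$ with $|Y| < |X'|$, setting $X'' := Y$ immediately witnesses redundancy via the identity above, since $Y$ is automatically disjoint from $X \sm X'$.

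Since the statement is essentially a reformulation of the definition of redundancy, no genuine obstacle is expected. The only care required lies in the forward direction's restriction step $Y := X'' \sm (X \sm X')$, which keeps the set inside the vertex set of $G - (X \sm X')$ without enlarging it or destroying the $\hh$-deletion property.
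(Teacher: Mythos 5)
Your proof is correct and matches the intended argument: the paper states this as an observation without proof precisely because it is the direct unfolding of the definition of redundancy that you carry out. You rightly handle the one subtlety (that the witness $X''$ in the definition may intersect $X \setminus X'$, so one must pass to $X'' \setminus (X \setminus X')$ in the forward direction), which is the only point where care is needed.
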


An important observation is that when $X' \subseteq X$ of size at least $\ell + 1$ is weakly $(\hh,\ell)$-separable, then it is redundant in $X$ by a simple exchange argument.
This fact has been already leveraged in \bmp{previous work}~\cite{AgrawalKLPRSZ22, JansenKW21} when analyzing the structure of minimum-size $\hh$-deletion sets, which clearly cannot contain any redundant subsets.
We exploit it in a different context, to prove that if
there is a large flow between an $\hh$-deletion set $X$ and
a weakly $(\hh,\ell)$-separable set $Z$, then $X$ has a redundant subset.
\mic{Subsequently, we will show that this redundant subset
can efficiently \jjh{be} detected.}

\begin{figure}
    \centering
    \includegraphics[scale=1.2]{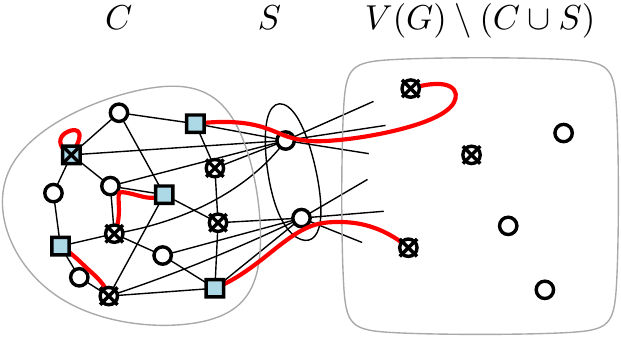}
    \caption{
    \mic{Illustration for \cref{lem:subset:redundant}:} an~$(\hh,\ell)$-separation \mic{$(C,S)$} in a graph~$G$ where~$\hh$ is the class of triangle-free graphs and~$\ell=2$. Vertices marked with a cross form an $\hh$-deletion set~$X$ in~$G$, while the \mic{set~$Z$} of size~$2\ell+1=5$ marked with blue squares is weakly~$(\hh,\ell)$-separable. A set of~$2\ell+1$ vertex-disjoint~$(Z,X)$-paths~$\mathcal{P}$ is highlighted, witnessing~$\lambda_G(Z,X)=|Z|$.  Since~$|X \cap C| > \ell$, the set~$X$ is not a minimum $\hh$-deletion set in~$G$: it can be improved by replacing~$X \cap C$ with~$S$. When~$(C,S)$ weakly covering~$Z$ exists but is unknown to the algorithm, we can \mic{still} improve~$X$ since the set of $X$-endpoints of~$\mathcal{P}$ \mic{also} form a redundant set in~$X$.}
    \label{fig:separationImprovement}
\end{figure}

\begin{lemma}
\label{lem:subset:redundant}
Let $\hh$ be a hereditary and union-closed class of graphs.
Consider a graph $G$, an $\hh$-deletion set $X$ in $G$, and \bmp{a weakly} $(\hh, \ell)$-{separable} set $Z \subseteq V(G)$.
Suppose that there exists a subset $X' \subseteq X$ of size $2\ell+1$ such that $\lambda_G(Z,X') = 2\ell+1$.
Then $X'$ is redundant in~$X$.
\end{lemma}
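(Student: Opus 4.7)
The plan is to use the hypothesised~$(\hh,\ell)$-separation~$(C,S)$ weakly covering~$Z$ together with a family~$\mathcal{P}$ of~$2\ell+1$ vertex-disjoint~$(Z,X')$-paths provided by Menger's theorem (applicable since $\lambda_G(Z,X') = 2\ell+1$). The witness I would produce is $X'' := (X' \setminus C) \cup S$, giving the new candidate deletion set $\tilde X := (X \setminus X') \cup X'' = (X \setminus (X' \cap C)) \cup S$, which intuitively replaces the vertices of~$X'$ lying inside~$C$ by the separator~$S$.

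First I would bound~$|X''|$ via a counting argument on~$\mathcal{P}$. Since $|S|\le \ell$ and the paths in $\mathcal{P}$ are pairwise vertex-disjoint, at most~$\ell$ of them intersect~$S$, so at least~$\ell+1$ paths avoid~$S$ entirely. Any such path must start in~$C$ (as $Z\subseteq C\cup S$ and the path avoids~$S$) and cannot exit~$C$ (because $N_G(C)\subseteq S$), so its endpoint in~$X'$ lies in~$X'\cap C$. Vertex-disjointness then yields $|X'\cap C|\ge \ell+1$, hence $|X'\setminus C|\le \ell$ and $|X''|\le 2\ell<|X'|$ as required for redundancy.

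Next I would verify that $G-\tilde X\in\hh$ by splitting $V(G)\setminus\tilde X$ into its intersections with~$C$ and with~$V(G)\setminus C$. The first piece is contained in~$C$, so heredity together with $G[C]\in\hh$ yields that its induced subgraph belongs to~$\hh$. The second piece is disjoint from~$X$ (since all of~$X$ outside~$C$ remains in~$\tilde X$), so it is an induced subgraph of $G-X\in\hh$ and again belongs to~$\hh$ by heredity. Because $S\subseteq\tilde X$ removes all of $N_G(C)$, there are no edges between the two pieces, so $G-\tilde X$ is their disjoint union and union-closure of~$\hh$ completes the argument.

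I expect the main subtlety to lie in the first counting step, where one must simultaneously use $Z\subseteq C\cup S$ to pin down where an $S$-avoiding path can start and $N_G(C)\subseteq S$ to control where it ends, before leveraging vertex-disjointness to conclude $|X'\cap C|\ge\ell+1$. Once that bound is in hand, the verification that $\tilde X$ is an $\hh$-deletion set is a routine combination of heredity on the two sides of the separation with union-closure.
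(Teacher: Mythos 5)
Your proof is correct and follows essentially the same route as the paper's: the same $(\hh,\ell)$-separation $(C,S)$, the same Menger family of $2\ell+1$ disjoint $(Z,X')$-paths, and the same count showing at least $\ell+1$ of them avoid $S$ and hence terminate in $X'\cap C$, after which $S$ substitutes for the part of $X'$ inside $C$. The only cosmetic difference is that you verify membership in $\hh$ by splitting $G-\tilde X$ into the two sides of the separation rather than iterating over connected components.
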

\begin{proof}
%Let~$X$ be an $\hh$-deletion set in~$G$. Suppose~$Z$ is a weakly $(\hh,\ell)$-separable set in~$G$, implying there is an~$(\hh,\ell)$ separation~$(C,S)$ in~$G$ with~$Z \subseteq C \cup S$. 
\mic{Let $(C,S)$ be an~$(\hh,\ell)$ separation in~$G$ with~$Z \subseteq C \cup S$.}
From the definition of \mic{an}~$(\hh,\ell)$-separation, we have~$G[C] \in \hh$ while~$N_G(C) \subseteq S$ and~$|S|\leq \ell$. %Let~$X' \subseteq X$ have size~$2\ell+1$ such that~$\lambda_G(Z,X') = 2\ell+1 = |X'|$.

By Menger's theorem, the cardinality of a maximum packing of vertex-disjoint~$(Z,X')$-paths equals~$\lambda_G(Z,X')$. Hence there exists a family~$\mathcal{P} = \{P_1, \ldots, P_{2\ell+1}\}$ of vertex-disjoint paths, each of which connects a unique vertex~$z_i \in Z$ to a unique vertex~$x_i \in X'$ (possibly $x_i = z_i$). At most~$|S|$ of these paths intersect the separator~$S$ of the~$(\hh,\ell)$-separation \bmp{(see Figure \ref{fig:separationImprovement})}. Let~$X'' = \{x_i \mid P_i \cap S = \emptyset\}$ denote the $X'$-endpoints of those paths not intersecting~$S$, and let~$\mathcal{P}'$ be the corresponding paths. Each path~$P_i$ in~$\mathcal{P}'$ is disjoint from~$S$ and has an endpoint~$z_i \in Z$. Since~$Z \subseteq C \cup S$, the~$z_i$ endpoint belongs to~$C$. As~$N_G(C) \subseteq S$ and~$P_i$ does not intersect~$S$, the other endpoint~$x_i$ also belongs to~$C$. Hence all vertices of~$X''$ belong to~$C$, and there are at least~$2\ell+1 - \ell = \ell+1$ of them.

Let~$X^* := (X \setminus X'') \cup S$, and observe that~$|X^*| < |X|$ since~$|X''| \geq \ell+1$ while~$|S|\leq \ell$. We prove that~$G - X^* \in \hh$, by showing that~$S$ is an $\hh$-deletion set in~$G - (X \setminus X'')$. Since~$\hh$ is union-closed, it suffices to argue that each connected component~$H$ of~$G - ((X \setminus X'') \cup S)$ belongs to~$\hh$. If~\bmp{$H$} contains no vertex of~$X''$, then~\bmp{$H$} is an induced subgraph of~$G - X \in \hh$ and therefore~\bmp{$H \in \hh$} since the graph class is hereditary. If~\bmp{$H$} contains a vertex of~$X'' \subseteq C$, then the component~$H$ is an induced subgraph of~$G[C]$ since~$N_G(C) \subseteq S$ \bmp{is part of the set~$X^*$}. Hence~$H$ is an induced subgraph of~$G[C] \in \hh$, which implies~$H \in \hh$ as~$\hh$ is hereditary. This shows that~$X^*$ is indeed an $\hh$-deletion set.

Since~$(X \setminus X'') \cup S$ is an $\hh$-deletion set smaller than~$X$, the set~$X''$ is redundant in~$X$. As~$X' \supseteq X''$, it follows that~$X'$ is redundant as well.
\end{proof}

\subsection{The win/win strategy}

\mic{The classic 4-approximation algorithm for computing a (standard) tree decomposition is based on the existence of balanced separators in graphs of bounded treewidth.}
In a graph $G$ of treewidth $\le k$, any set $S$ of $3k+4$ vertices 
can by partitioned into $S = S_A \cup S_B$ in such a way that
$|S_A|, |S_B| \le 2k+2$ and $\lambda_G(S_A,S_B) \le k+1$~\cite[Corollary 7.21]{CyganFKLMPPS15}.
%there exists a separation $(A,B)$ such that $|A \cap B| \le k+1$ and each of $A, B$ contains at most
 %$2k+2$ elements of~$S$
This is not always possible if we only have a bound on $\hh$-treewidth $\hhtw(G) \le k$ because a large subset $S'$ of $S$ might lie in a single \bmp{well-connected} base component of a tree $\hh$-decomposition, i.e., a base component whose standard treewidth is large.
But then $S'$ is weakly $(\hh,k+1)$-separable, \bmp{which can also} be exploited when constructing a decomposition.
We show that this is in fact the only scenario in which we cannot split $S$ in a balanced way.

\begin{lemma}
%\label{lem:subset:subroutine} 
\label{lem:split:or:separable}
Let $\hh$ be a hereditary and union-closed class of graphs.
Let $G$ be a graph with $\hhtw(G) \le k$. For any set $S \subseteq V(G)$ of size $3k+4$, at least one of the following holds.
\begin{enumerate}
     \item There is a partition $S = S_A \cup S_B$ such that $|S_A|, |S_B| \le 2k+2$ and $\lambda_G(S_A,S_B) \le k+1$.\label{case:split}
     \item There is a set $S' \subseteq S$ of size $2k+3$ which is weakly $(\hh,k+1)$-separable.\label{case:separable}
\end{enumerate}
\end{lemma}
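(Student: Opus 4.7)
The plan is to mimic the classical balanced-separator argument for treewidth (cf.~\cite[Cor.~7.21]{CyganFKLMPPS15}), while handling the fact that a leaf bag of a tree $\hh$-decomposition may be arbitrarily large due to the $L$-vertices it holds. I fix any tree $\hh$-decomposition $(T, \chi, L)$ of $G$ witnessing $\hhtw(G) \le k$ and split on whether some single bag already captures many vertices of $S$. If $|\chi(t) \cap S| \ge 2k+3$ for some node $t$, then \cref{obs:tree:secluded} says $(\chi(t) \cap L, \chi(t) \setminus L)$ is an $(\hh, k+1)$-separation weakly covering the whole bag $\chi(t)$, so any subset $S' \subseteq S \cap \chi(t)$ of size $2k+3$ witnesses item~\ref{case:separable}. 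Hence I assume $|\chi(t) \cap S| \le 2k+2$ for every $t$ and aim for item~\ref{case:split}.

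I then root $T$ arbitrarily and let $t^*$ be a deepest node with $|S \cap \chi(T_{t^*})| \ge \lceil |S|/2 \rceil$ (the root qualifies). By the choice of $t^*$, every child $t'$ of $t^*$ satisfies $|S \cap \chi(T_{t'})| \le \lceil |S|/2 \rceil - 1 \le \lfloor |S|/2 \rfloor$, and the complementary side has $|S \cap (V(G) \setminus \chi(T_{t^*}))| = |S| - |S \cap \chi(T_{t^*})| \le \lfloor |S|/2 \rfloor$. Define $X := \chi(t^*) \setminus L$; since every $L$-vertex appears in a unique leaf bag, $|X| \le k+1$. Removing $X$ from $G$ breaks the graph into pieces: the sets $\chi(T_i) \setminus \chi(t^*)$ for each subtree $T_i$ of $T - t^*$, together with (if $t^*$ is a leaf) the base components inside $\chi(t^*) \cap L$, whose external neighbors lie in $X$ by the tree $\hh$-decomposition property. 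Each piece carries at most $\max(\lfloor |S|/2 \rfloor, |S \cap \chi(t^*) \cap L|) \le 2k+2$ vertices of $S$.

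I would then assemble $S_A, S_B$ by processing the pieces in decreasing $S$-count, assigning each piece to the currently lighter side, and finally distributing $S \cap X$ freely to rebalance. Since every piece's $S$-count is at most $2k+2$ and the total is $|S| = 3k+4 \le 2(2k+2)$, a short case analysis (identical in spirit to the proof of \cite[Cor.~7.21]{CyganFKLMPPS15}) shows both sides end up with at most $2k+2$ vertices of $S$. Because $X$ separates the pieces in $G$, it is an $(S_A, S_B)$-separator of order at most $k+1$, so $\lambda_G(S_A, S_B) \le k+1$, establishing item~\ref{case:split}.

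The main obstacle specific to $\hh$-treewidth (as opposed to ordinary treewidth) is the leaf case for $t^*$: then $\chi(t^*)$ itself can be arbitrarily large and cannot be used directly as a size-$(k+1)$ separator. The rescue is the observation that $\chi(t^*) \cap L$ is a union of base components whose only external neighbors lie in $X = \chi(t^*) \setminus L$; this $L$-portion functions as an extra piece in the partition argument, and the small-bag bound $|S \cap \chi(t^*)| \le 2k+2$ keeps its $S$-weight low enough for the greedy partition to absorb.
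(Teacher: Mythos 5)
Your proof is correct and follows essentially the same route as the paper's: dispatch the case of a bag holding $\geq 2k+3$ vertices of $S$ to weak $(\hh,k+1)$-separability via \cref{obs:tree:secluded}, and otherwise use a suitably chosen bag (minus $L$) as a balanced separator of order at most $k+1$, regrouping the resulting pieces into two sides of size at most $2k+2$ each. The only substantive difference is your threshold $\lceil|S|/2\rceil$ for choosing $t^*$, which forces you to handle the possibility that $t^*$ is a leaf (the paper's threshold $2k+3$ rules this out, so that $\chi(t^*)\cap L=\emptyset$ and the whole bag is small); your treatment of that case, resting on $N_G(\chi(t^*)\cap L)\subseteq\chi(t^*)\setminus L$ and the assumed bound $|\chi(t^*)\cap S|\le 2k+2$, is sound, and the greedy regrouping you sketch does succeed here because the total weight $3k+4$ is small relative to twice the capacity $2k+2$.
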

\begin{proof}
Consider an optimal tree $\hh$-decomposition~$(T,\chi,L)$ of~$G$, so that~$|\chi(t) \setminus L| \leq k+1$ for each~$t \in V(T)$. \bmp{Let~$r \in V(T)$ be its root}. We start by showing that~\eqref{case:separable} holds if some leaf bag of the decomposition contains~$2k+3$ vertices from~$S$.

So suppose there exists a leaf~$t \in V(T)$ with~$|\chi(t) \cap S| \geq 2k+3$, and let~$S' \subseteq \chi(t) \cap S$ be an arbitrary subset of size exactly~$2k+3$. %The definition of a tree $\hh$-decomposition
\mic{\cref{obs:tree:secluded}} ensures that~$(C^* := \chi(t) \cap L, S^* := \chi(t) \setminus L)$ is an~$(\hh,k+1)$-separation, which weakly covers~$\chi(t)$ and therefore~$S'$. Hence~\eqref{case:separable} holds.

In the remainder, it suffices to show that~\eqref{case:split} holds when there is no leaf~$t \in V(T)$ with~$|\chi(t) \cap S| \geq 2k+3$. Pick a deepest node~$t^*$ in the rooted tree~$T$ for which~$|S \cap \chi(T_{t^*})| \geq 2k+3$. Then~$t^*$ is not a leaf since the previous case did not apply, so by definition of tree $\hh$-decomposition we have~$\chi(t^*) \cap L = \emptyset$. Let~$D_1, \ldots, D_p$ be the connected components of~$G - \chi(t^*)$. Since the pair~$(T,\chi)$ satisfies all properties of a standard tree decomposition, the bag~$\chi(t^*)$ is a separator in~$G$ so that for each component~$D_i$, there is a single tree~$T^i$ in the unrooted forest~$T - t^*$ such that~$T^i$ contains all nodes whose bags contain some~$v \in V(D_i)$; see for example~\cite[(2.3)]{RobertsonS86}.

The choice of~$t^*$ ensures that~$|V(D_i) \cap S| < 2k+3$ for all~$i \in [p]$: when vertices of~$D_i$ are contained in bags of a tree rooted at a child of~$t^*$ this follows from the fact that~$t^*$ is a deepest node for which~$|S \cap \chi(T_{t^*})| \geq 2k+3$; when vertices of~$D_i$ are contained in the tree~$T^i$ of~$T - t^*$ having the parent of~$t^*$, this follows from the fact that~$\chi(T_{t^*})$ contains at least~$2k+3$ vertices from~$S$, none of which appear in~$D_i$ since a vertex occurring in~$\chi(T_{t^*})$ and in a bag outside~$T_{t^*}$, is contained in~$\chi(t^*)$ and therefore part of the separator~$\chi(t^*)$ used to obtain the component~$D_i$. Hence none of the vertices of~$S \cap \chi(t^*)$ can appear in~$D_i$, which means there are at most~$|S|-(2k+3)\leq k+1$ vertices in~$V(D_i) \cap S$.

Since~$|S| = 3k+4$ and no component~$D_i$ contains at least~$2k+3$ vertices from~$S$, the components can be partitioned into two parts~$\mathcal{D}_1, \mathcal{D}_2$ such that~$\sum _{D_i \in \mathcal{D}_j} |V(D_i) \cap S| \leq 2k+2$ for each~$j \in \{1,2\}$. If some component contains at least~$k+2$ vertices from~$S$, then that component is a part by itself, ensuring the remainder has at most~$3k+4-(k+2)\leq 2k+2$ vertices from~$S$; if no component contains at least~$k+2$ vertices from~$S$, then any inclusion-minimal subset of components having at least~$k+2$ vertices from~$S$ has at most~$2k+2$ of them.

Define~$S'_A := \bigcup _{D_i \in \mathcal{D}_1} V(D_i) \cap S$ and~$S'_B := \bigcup _{D_i \in \mathcal{D}_2} V(D_i) \cap S$, and assume without loss of generality that~$|S'_A| \geq |S'_B|$. Note that~$|S'_A \cup S'_B| = |S \setminus \chi(t^*)| \geq 2k+3$, so that the larger side~$S'_A$ contains at least~$k+2$ vertices. To turn~$S'_A, S'_B$ into the desired partition of~$S$, it suffices to take~$S_A = S'_A$ and~$S_B = S'_B \cup (\chi(t^*) \cap S) = S \setminus S_A$. It is clear that~$|S_A| = |S'_A| \geq k+2$, while~$|S_B| = |S| - |S_A| \geq 3k+4 - (2k+2) \geq k+2$. The fact that $|S_A|,|S_B| \geq k+2$ while they partition $S$ with $|S| = 3k+4$ implies $|S_A|,|S_B| \leq 2k+2$ as desired.
Since~$\chi(t^*)$ separates~$S'_A$ from~$S'_B$, it separates~$S_A$ from~$S_B$ and we have~$\lambda_G(S_A, S_B) \leq |\chi(t^*)| = |\chi(t^*) \setminus L| \leq k+1$.
\end{proof}

We can now translate the last two lemmas into an algorithmic statement, which will be used as a subroutine in the main algorithm.
When $\hhtw(G) \le k$ and $S \subseteq V(G)$ is of size $3k+4$, then we can either split it in a balanced way, split off a base component, or detect a redundancy in a given $\hh$-deletion set and reduce its size.
Each of these outcomes will guarantee some progress for the task of constructing a tree $\hh$-decomposition.

\begin{lemma}
\label{lem:subset:subroutine}
Let $\hh$ be a hereditary and union-closed class of graphs.
There is an algorithm that,
using oracle-access to an algorithm~$\mathcal{A}$ for \hhdel,
takes as input an $n$-vertex $m$-edge graph $G$, integer $k$, $\hh$-deletion set $X$ in $G$, and a set $S \subseteq V(G)$ of size $3k+4$, 
runs in time $\Oh(8^k\cdot k(n+m))$ and polynomial space, makes  $\Oh(8^k)$ calls to $\mathcal{A}$ on induced subgraphs of $G$ and parameter $2k+2$, and terminates with one of the following outcomes.
 \begin{enumerate}
     \item\label{outcome:partition} A partition $S = S_A \cup S_B$ and a separation $(A,B)$ in $G$ are returned, such that 
     $S_A \subseteq A$, $S_B \subseteq B$, $|S_A| \le 2k+2$, $|S_B| \le 2k+2$, and $|A \cap B| \le k+1$.
     %an unrestricted $(S_A,S_B)$-separator $P$ are returned, such that $|S_A|, |S_B| \ge k+2$ and $|P| \le k+1$.
     \item\label{outcome:base} A subset $S' \subseteq S$ and a separation $(A,B)$ in $G$ are returned, such that
     $S' \subseteq A$, $X \subseteq B$, $|S'| =  2k+3$, and $|A \cap B| \le 2k+2$. \bmp{(This implies that~$G[A \setminus B] \in \hh$.)}
     %an unrestricted $(S',X)$-separator $P$ are returned, such that $|S'| =2k +3$ and $|P| \le 2k+2$.
     \item\label{outcome:redundant} An $\hh$-deletion set $X'$ in $G$ is returned, that is smaller than $X$.
     \item The algorithm correctly concludes that $\hhtw(G) > k$.
 \end{enumerate}
\end{lemma}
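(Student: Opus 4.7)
My plan is to derive the algorithm by turning the structural dichotomy of \cref{lem:split:or:separable} into a brute-force search, coupled with the redundancy test enabled by \cref{lem:subset:redundant}. Since $|S| = 3k+4$, enumerating either partitions of $S$ into two parts, or subsets of $S$ of size $2k+3$, involves at most $2^{3k+4} = \Oh(8^k)$ choices, matching the declared running time. The only subroutines I need are the Ford--Fulkerson algorithm of \cref{cor:subset:ford} and one oracle call per iteration in the second step.

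Step~1 targets outcome~1: for each partition $S = S_A \cup S_B$ with $|S_A|, |S_B| \le 2k+2$, I invoke \cref{cor:subset:ford} on $(S_A, S_B)$ with threshold $k+1$; if it certifies $\lambda_G(S_A, S_B) \le k+1$, I return the resulting separation. Step~2 targets outcomes~2 and~3: for each subset $S' \subseteq S$ with $|S'| = 2k+3$, I invoke \cref{cor:subset:ford} on $(S', X)$ with threshold $2k+2$. If a separation $(A,B)$ of order at most $2k+2$ is produced, I output outcome~2; the parenthetical assertion $G[A \sm B] \in \hh$ follows because $X \subseteq B$ makes $A \sm B$ disjoint from $X$, so $G[A \sm B]$ is an induced subgraph of $G - X \in \hh$ and hence lies in $\hh$ by heredity. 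Otherwise Ford--Fulkerson returns $2k+3$ vertex-disjoint $(S', X)$-paths; I let $X' \subseteq X$ collect their $X$-endpoints, a set of size exactly $2k+3$ with $\lambda_G(S', X') = 2k+3$, and call the oracle $\mathcal{A}$ on $G - (X \sm X')$ with parameter $2k+2$. If $\mathcal{A}$ returns some $Y$ of size at most $2k+2$, I output outcome~3 with the strictly smaller $\hh$-deletion set $(X \sm X') \cup Y$. If nothing succeeds across both steps, I return outcome~4.

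The complexity bound falls out immediately: $\Oh(8^k)$ iterations in each step, each running Ford--Fulkerson once in $\Oh(k(n+m))$ time (the thresholds $k+1$ and $2k+2$ are both $\Oh(k)$), plus at most one oracle call per Step~2 iteration on an induced subgraph with parameter $2k+2$, and polynomial space since iterations are independent. The main obstacle is justifying outcome~4: assuming $\hhtw(G) \le k$, I must show that some iteration succeeds. By \cref{lem:split:or:separable}, either case~1 holds, which Step~1 detects, or case~2 holds and produces a weakly $(\hh,k+1)$-separable subset $S'$ of size $2k+3$ that Step~2 enumerates. For this $S'$, either $\lambda_G(S', X) \le 2k+2$ and outcome~2 fires, or the extracted $X'$ satisfies the hypothesis of \cref{lem:subset:redundant}, yielding redundancy of $X'$ in $X$, i.e.\ an $\hh$-deletion set of size at most $2k+2$ in $G - (X \sm X')$; the oracle then must return such a set, firing outcome~3. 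Hence failure of both steps is possible only when $\hhtw(G) > k$.
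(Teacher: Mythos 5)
Your proposal is correct and follows essentially the same approach as the paper's proof: enumerate balanced partitions of $S$ for outcome~1, enumerate size-$(2k+3)$ subsets $S'$ and test $\lambda_G(S',X)$ for outcome~2, and otherwise use the disjoint paths plus \cref{lem:subset:redundant} and an oracle call to detect redundancy for outcome~3, with \cref{lem:split:or:separable} justifying outcome~4. The only cosmetic difference is that you interleave the tests for outcomes~2 and~3 within a single loop over $S'$, whereas the paper runs them as two separate phases; this changes nothing in correctness or complexity.
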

\begin{proof}
The algorithm starts by trying to reach the first outcome. For each partition~$S_A \cup S_B$ of~$S$ in which both parts have at most~$2k+2$ vertices, it performs at most~$k+2$ iterations of the Fold-Fulkerson algorithm to test whether~$\lambda_G(S_A, S_B) \leq k+1$. If so, then the algorithm outputs a corresponding separation~$(A,B)$ in~$G$ with~$S_A \subseteq A$,~$S_B \subseteq B$, and~$|A \cap B| = \lambda_G(S_A,S_B) \leq k+1$. By \cref{cor:subset:ford}, this can be done in time~$\Oh(k(n+m))$.% The separation can easily be extracted from the final state of the Ford-Fulkerson algorithm.\bmpr{Add reference. 

Next, the algorithm attempts to reach the second outcome. For each subset~$S' \subseteq S$ of size~$2k+3$, it performs at most~$2k+3$ iterations of the Ford-Fulkerson algorithm to test whether~$\lambda_G(S', X) \leq 2k+2$. If so, the algorithm extracts a corresponding separation~$(A,B)$ with~$S' \subseteq A$,~$X \subseteq B$, and~$|A \cap B| \leq 2k+2$, and outputs it.

If the algorithm has not terminated so far, it will reach the third or fourth outcome. It proceeds as follows.
\begin{enumerate}
    \item For each subset~$S' \subseteq S$ of size~$2k+3$, we have~$\lambda_G(S', X) > 2k+2$ since we could not reach the second outcome. As~$|S'| = 2k+3$ this implies~$\lambda_G(S',X) = 2k+3$. By Menger's theorem, there is a packing~$\mathcal{P}_{S'}$ of~$2k+3$ vertex-disjoint~$(S',X)$-paths, and such a packing can be extracted from the final stage of the Ford-Fulkerson computation.
    \item Let~$X'_{S'} \subseteq X$ be the endpoints in the set~$X$ of the paths~$\mathcal{P}_{S'}$, so that~$|X'_{S'}| = |S'| = 2k+3$. 
    \item We invoke algorithm~$\mathcal{A}$ on the graph~$G - (X \setminus X'_{S'})$ and parameter value~$2k+2$, to find a minimum-size $\hh$-deletion set in~$G - (X \setminus X'_{S'})$ or conclude that such a set has size more than~$2k+2$. If~$\mathcal{A}$ returns a solution~$Y$ of size at most $2k+2$, then~$(X \setminus X'_{S'}) \cup Y$ is an $\hh$-deletion set in~$G$ smaller than~$X$ and we return it as the third outcome.
\end{enumerate}

If none of the preceding steps caused the algorithm to give an output, then we conclude that~$\hhtw(G) > k$ and terminate.

\subparagraph*{Correctness.} We proceed to argue for correctness of the algorithm. It is clear that if the algorithm terminates with one of the first three outcomes, then its output is correct. We proceed to show that if~$\hhtw(G) \leq k$, then it will indeed terminate in one of those outcomes. So assume~$\hhtw(G) \leq k$, which means we may apply Lemma~\ref{lem:split:or:separable} to~$S$ and~$G$. If Case~\ref{case:split} of Lemma~\ref{lem:split:or:separable} holds, then the algorithm will detect the corresponding separation in the first phase of the algorithm and terminate with a suitable separation. 
So assume Case~\ref{case:separable} holds, so that there is a set~$S' \subseteq S$ of size~$2k+3$ which is weakly~$(\hh,k+1)$-separable. Since the set~$S'$ \bmp{is a candidate for reaching the second outcome, if that outcome is not reached} we have~$\lambda_G(S',X) > 2k+2$ and hence~$\lambda_G(S',X) = 2k+3 = |S'|$. 
Consider the family of $(S',X)$-paths~$\mathcal{P}_{S'}$ constructed by the algorithm for this choice of~$S'$ and let~$X'_{S'}$ be their endpoints in~$X$. The paths~$\mathcal{P}_{S'}$ show that~$\lambda_G(S',X'_{S'}) = |S'| = |X'_{S'}| = 2k+3$. Now we can apply Lemma~\ref{lem:subset:redundant} for~$\ell=k+1$ to infer that~$\jjh{X'_{S'}}$ is redundant in~$X$, which implies that~$G - (X \setminus X'_{S'})$ has an $\hh$-deletion set smaller than~$|X'_{S'}| = 2k+3$. Hence algorithm~$\mathcal{A}$ outputs an $\hh$-deletion set smaller than~$|X'_{S'}|$ and the algorithm terminates with the third outcome. 

Since the algorithm reaches one the first three outcomes when~$\hhtw(G) \leq k$, the algorithm is correct when it reaches the last outcome.

\subparagraph*{Running time and oracle calls.} Each of the three phases of the algorithm consist of enumerating subsets~$S' \subseteq S$, of which there are~$2^{|S|} \leq 2^{3k+4} = \Oh(8^k)$. For each such set~$S'$, the algorithm performs~$\Oh(k)$ rounds of the Ford-Fulkerson algorithm in time~$\Oh(k(n+m))$. In the last phase, the algorithm additionally invokes~$\mathcal{A}$ on an induced subgraph of~$G$ for each~$S'$ to find an $\hh$-deletion set of size at most~$2k+2$ if one exists. It follows that the running time of the algorithm (not accounting for the time spent by~$\mathcal{A}$) is~$\Oh(8^k \cdot k(n+m))$. The space usage is easily seen to be polynomial in the input size since the algorithm is iterative. This concludes the proof of Lemma~\ref{lem:subset:subroutine}.
\end{proof}

\subsection{The decomposition algorithm}
%\section{Proof of Theorem \ref{thm:intro:treewidth}} \label{sec:appendix:mainthm}

We retrace the proof of \cite[Theorem 7.18]{CyganFKLMPPS15}
which gives the classic algorithm for approximating (standard) treewidth.
{Consider sets $S \subseteq W \subseteq V(G)$ such that $\partial_G(W) \subseteq S$ and $|S| = 3k+4$; we aim to construct a tree decomposition of $G[W]$ which contains $S$ in its root bag.
We can consider all ways to partition $S$ into $S_A \cup S_B$ such that~$|S_A|, |S_B| \leq 2k+2$ 
and compute a minimum $(S_A,S_B)$-separator. Since~$|S| = 3k+4$, there are $2^{3k+4} = \Oh(8^k)$ such partitions. 
When $\tw(G) \le k$, we are guaranteed that for some partition  $S = S_A \cup S_B$ we will 
find a separator in $G[W]$ of size $\le k+1$ which yields the \bmp{separation} $(A_W,B_W)$ \mic{ in $G[W]$} satisfying $S_A \subseteq A_W$, $S_B \subseteq B_W$, and $|A_W\cap B_W| \le k+1$.
Then the boundary $\partial_G(A_W)$ is contained in $S_A \cup (A_W \cap B_W)$, and similarly $\partial_G(B_W) \subseteq S_B \cup (A_W \cap B_W)$.
We create instances $(A_W, S_A \cup  (A_W \cap B_W))$ and $(B_W, S_B \cup (A_W \cap B_W))$
to be solved recursively, analogously as $(W,S)$.
Note that each of the sets $S_A \cup  (A_W \cap B_W)$, $S_B \cup  (A_W \cap B_W)$ has less \bmp{than} $3k+4$ vertices,
\mic{so we can augment each of them with one more vertex before making the recursive call while preserving the size invariant.
This step ensures that the recursion tree has at most $|V(G)|$ nodes.} 
%elements so we can maintain such a size invariant. %\jjhr{The writing seems to suggest that these are the exact instances that are recursed on, while instead we consider the restriction of $(A,B)$ to $W$, maybe add 'essentially' or 'roughly speaking' or '(restricted to $W$)'?}
After computing tree decompositions for $G[A]$ and $G[B]$ we merge them by creating a new root with a bag $S \cup  (A_W \cap B_W)$ of size at most $4k + 5$.
Hence, we are able to construct a tree decomposition of width $4k+4$ assuming that one of width $k$ exists.}

There are two differences between the outlined algorithm and ours, while the recursive scheme stays the same.
First, due to scenario \eqref{outcome:base} in \cref{lem:subset:subroutine} we need to handle the cases where we can directly create a base component containing at least $2k+3$ vertices from $S$. 
The lower bound $2k+3$ is greater than the separator size $2k+2$ so we will move on to a subproblem where $S$ is significantly smaller.
\mic{We need to include the separator of size $2k+2$ in the root bag, together with $S$, so we obtain a slightly weaker bound on the maximum bag size, that is $5k + 6$.}
Next, due to scenario \eqref{outcome:redundant} we might not make direct progress in the recursive scheme but instead we reduce the size of an $\hh$-deletion set~$X$ that we maintain \mic{(which initially contains all vertices).}
This situation can happen at most $|V(G)|$ many times, so eventually we will reach outcome \eqref{outcome:partition} or \eqref{outcome:base}.

We introduce the following operation which will come in useful for merging decompositions produced by solving two subproblems recursively.

\begin{definition}\label{def:subset:merge}
Let $A,B$ be vertex sets in a graph $G$
and $(T_A, \chi_A, L_A)$, $(T_B, \chi_B, L_B)$ be tree $\hh$-decompositions of $G[A]$, $G[B]$, respectively, with roots $r_A, r_B$.
We define the \emph{merge} $(T,\chi,L)$ of $(T_A, \chi_A, L_A)$ and $(T_B, \chi_B, L_B)$ \mic{along a given set $R \subseteq V(G)$}. % as follows.
We construct $T$ by taking a disjoint union of $T_A,T_B$ and inserting a root node $r$ with children $r_A, r_B$.
We define $L = L_A \cup L_B$ and $\chi$ as follows: 
     $\chi_{\vert {T_A}} = \chi_A$, $\chi_{\vert {T_B}} = \chi_B$ and, $\chi(r) = R$. %\chi_A(r_A) \cup \chi_B(r_B)$.
\end{definition}

The merge of two tree $\hh$-decompositions along $R$ may not be a valid tree $\hh$-decomposition. Whenever we apply the concept of merge, we shall prove that the merge is valid. We restate \cref{thm:intro:treewidth} for \bmp{readability}.

\restTreewidthAlgorithm*

\begin{proof}
{We shall provide an algorithm that recursively solves the following subproblem. % in the time and space specified by the theorem. 
Our final goal is to solve
\textsc{Decompose$(G,k,\emptyset,V(G))$}.} 
%the routine with $S = \emptyset$.

\defproblem{Decompose$(G,k,S,W)$}
{Graph $G$, integer $k$, sets $S \subseteq W \subseteq V(G)$, such that $\partial_G(W) \subseteq S$ and $|S| \le 3k + 3$.}
{Construct a tree $\hh$-decomposition $(T, \chi, L)$ of $G[W]$ of width at most $5k + 5$ such that $S \cap L = \emptyset$ and $S$ is contained in the root bag of $(T,\chi,L)$, or correctly report that $\hhtw(G) > k$.}

We will maintain an $\hh$-deletion set $X$ in $G$ as a `global variable'.
Initially we set $X = V(G)$.
Given a subproblem \textsc{Decompose}$(G,k,S,W)$ we shall either solve it directly, or reduce it to at most two smaller subproblems (measured by the size of $W$), or decrease the size of $X$ and make another attempt to solve the same subproblem with the smaller deletion set.
The last scenario cannot happen more than $n$ times, therefore finally we will be able to make progress in the recursion.
We intentionally do not pass $X$ as an argument in the recursion for the sake of optimizing the dependency on $n$ in the running time.
Treating $X$ as a global variable allows us to upper bound the total number of times when $X$ is refined during the entire computation.

Consider a subproblem \textsc{Decompose}$(G,k,S,W)$ and an $\hh$-deletion set $X$ in $G$. First, if $|W| \le 5k + 6$ we can simply return a tree $\hh$-decomposition consisting of a single node with a bag $W$ and having $L = \emptyset$. 
%root bag $S \cup X$ and a single leaf bag $W$.
%We call such instances \emph{basic}.
Assume from now on that $|W| > 5k + 6$. %, so in particular $|V(G)| > 4k + 5 > 3k + 3$.
This, in particular, allows us to choose a set $\widehat{S} \subseteq W$ of size exactly $3k+4$ such that ${S} \subsetneq \widehat S$ (the choice of $\widehat{S} \setminus S$ is arbitrary; this step is important only for the running time analysis).
We execute the algorithm from \cref{lem:subset:subroutine} for $(G,k,X,\widehat S)$ %\micr{note: we cannot pass $G[W]$ here because the solution $X$ is `global'}
and proceed \bmp{according} to the outcome received.

 \begin{enumerate}
     \item \bmp{Suppose} we obtain a partition $\widehat S = S_A \cup S_B$ and a separation $(A,B)$ in $G$, such that 
     $S_A \subseteq A$, $S_B \subseteq B$, $|S_A| \le 2k+2$, $|S_B| \le 2k+2$, and $|A \cap B| \le k+1$.
    %Then $P' = P \cap W$ is an  unrestricted $(S_A,S_B)$-separator in $G[W]$.
    Then $(A_W, B_W) = (A \cap W, B \cap W)$ is a separation in $G[W]$.
    %Let $(A,B)$ be a separation in $G[W]$ such that $S_A \subseteq A$, $S_B \subseteq B$, and $A \cap B = P'$. \micr{todo: computing $(A,B)$}
    Next, we set $\widehat{S}_A = S_A \cup (A_W \cap B_W)$,  $\widehat{S}_B = S_B \cup (A_W \cap B_W)$, and create instances $(G,k,\widehat{S}_A,A_W)$, $(G,k,\widehat{S}_B,B_W)$ to be solved recursively.
    Note that $\partial_G(A_W) \subseteq \widehat{S}_A$,  $\partial_G(B_W) \subseteq \widehat{S}_B$, and $|\widehat{S}_A|,  |\widehat{S}_B| \le (2k+2) + (k+1) = 3k+3$ therefore these instances satisfy the preconditions of the problem \textsc{Decompose}.
%\bmpr{Maybe add that if one of the recursive calls decides $\hhtw(G) > k$, we return the same, which is correct for hereditary~$\mathcal{H}$? \mic{added but this is repeated in (4)}}

    \mic{If for any of the subproblems we obtain \bmp{the} conclusion that $\hhtw(G) > k$, we \mic{report it as the outcome of the current call}.
    Otherwise,} let $(T_A, \chi_A, L_A)$ and $(T_B, \chi_B, L_B)$ be tree $\hh$-\bmp{decompositions} obtained after solving instances $(G,k,\widehat{S}_A,A_W)$, $(G,k,\widehat{S}_B,B_W)$, respectively.
    We return the merge of $(T_A, \chi_A, L_A)$ and $(T_B, \chi_B, L_B)$ \mic{along $\hat{S} \cup (A_W \cap B_W)$}.
    (Recall \cref{def:subset:merge}). 
    
    %We shall justify shortly that $(T,\chi)$ is valid output for \textsc{Decompose}$(G,k,S,W)$. \\
     
     \item \bmp{Suppose} we obtain a subset $S' \subseteq \hat{S}$ and a separation $(A,B)$ in $G$, such that
     $S' \subseteq A$, $X \subseteq B$, $|S'| =  2k+3$, and $|A \cap B| \le 2k+2$.
    Again, $(A_W, B_W) = (A \cap W, B \cap W)$ is a separation in $G[W]$.
     We set $\widehat{S}_A = (\widehat S \cap A_W) \cup (A_W \cap B_W)$,  $\widehat{S}_B =  (\widehat S \sm A_W) \cup (A_W \cap B_W)$.
     We have $\partial_G(A_W) \subseteq \widehat{S}_A$ and $\partial_G(B_W) \subseteq \widehat{S}_B$.
     The size of $\widehat{S}_A$ is at most $|\widehat{S}| + |A_W \cap B_W| \le (3k+4) + (2k+2) = 5k + 6$ while the size of 
      $\widehat{S}_B$ is at most $|\widehat{S} \sm S'| + |A_W \cap B_W| \le (k+1) + (2k+2) = 3k + 3$.
    Then  $(G,k,\widehat{S}_B,B_W)$ is a valid instance of \textsc{Decompose}.
    The size of  $\widehat{S}_A$ does not satisfy the invariant though and we handle the instance $(G,k,\widehat{S}_A,A_W)$ differently.
    
    As $X \subseteq B$ and $A_W \cap B_W \subseteq \widehat{S}_A$ we get $(A_W \sm \widehat{S}_A) \cap X = \emptyset$.
    Because $X$ is an $\hh$-deletion set, this means that \bmp{$G[A_W \sm \widehat{S}_A] \in \hh$ as it is an induced subgraph of~$G-X \in \hh$ while~$\hh$ is hereditary}.
    We construct a tree $\hh$-decomposition  $(T_A, \chi_A, L_A)$ of $G[A_W]$ as follows:
    we create a root node with a bag $\widehat{S}_A$ having a single child with a bag $A_W$
    and we set $L_A = A_W \sm \widehat{S}_A$. %\micr{do we need two nodes? edit: yes, by the definition the root cannot have $L$-vertices}
    The width of this decomposition equals $|\widehat{S}_A| - 1 \le 5k + 5$.
    Note that $(T_A, \chi_A, L_A)$ satisfies the output specification of \textsc{Decompose} for $(G,k,\widehat{S}_A,A_W)$.
    %This forms a valid tree $\hh$-decomposition of $G[A_W]$ of width
    %The width of this decomposition equals
    %$|\widehat{S}_A| - 1 \le 5k + 5$.
    Finally, we solve \textsc{Decompose}$(G,k,\widehat{S}_B,B_W)$.
    \mic{If this call reports that $\hhtw(G) >k$, we propagate this outcome.}
    Otherwise, we return the merge of the two computed tree $\hh$-decompositions \mic{along $\hat{S} \cup (A_W \cap B_W)$}. 
      
     \item \bmp{Suppose} an $\hh$-deletion set $X'$ in $G$ is returned, that is smaller than $X$.
     In this case we update $X \leftarrow X'$ and repeat the process, applying \cref{lem:subset:subroutine} again. 
     
     \item If the subroutine reports that $\hhtw(G) > k$ then we \mic{return this outcome for the current~call.}
 \end{enumerate}

\subparagraph*{Correctness.}
We argue that the construction given in cases (1) and (2) yields a valid output for \textsc{Decompose}.
In both cases we deal with a superset $\widehat S \subseteq W$ of $S$ and a separation $(A_W,B_W)$ of $G[W]$.
The sets $\widehat S_A, \widehat S_B$ satisfy $\partial_G(A_W) \subseteq \widehat S_A$, $\partial_G(B_W) \subseteq \widehat S_B$, and $\widehat S \subseteq \widehat S_A \cup \widehat S_B$.

Let $(T_A, \chi_A, L_A)$ and $(T_B, \chi_B, L_B)$ be tree $\hh$-decompositions obtained after solving instances $(G,k,\widehat{S}_A,A_W)$, $(G,k,\widehat{S}_B,B_W)$, either by recursion or by the construction for a leaf node in case (2).
We refer to the roots of $T_A,T_B$ as $r_A, r_B$, respectively.
By the output specification of \textsc{Decompose} it holds that
$\widehat S_A \subseteq \chi_A(r_A)$, and
$\widehat S_B \subseteq \chi_B(r_A)$.
We also have that $L_A \subseteq A_W \sm \widehat S_A$,  $L_B \subseteq B_W \sm \widehat S_B$.
%We merge the two decompositions into $(T,\chi)$, as described above.

Let $(T,\chi,L)$ be the merge of $(T_A, \chi_A, L_A)$ and $(T_B, \chi_B, L_B)$ \mic{along $\hat{S} \cup (A_W \cap B_W) = \widehat S_A \cup \widehat S_B$}.
We have $|\chi(r)| % = |\widehat S_A \cup \widehat S_B|
\le |\widehat S| + |A_W \cap B_W| \le (3k+4) + (2k + 2) = 5k + 6$ so we keep the maximum bag size in check. 
Next, $S$ is contained in $\chi(r)$ and is disjoint from both $L_A, L_B$.
It remains to show that $(T,\chi,L)$ is a valid tree $\hh$-decomposition of $G[W]$.
We check the conditions of \cref{def:tree:h:decomp}.

Let $v \in W$: we show that $Y_v = \{t \in V(T) \mid t \in \chi(t)\}$ is non-empty and connected.
If $v \in A_W \setminus \widehat{S}_A$, then $v\not\in B_W$ and $Y_v$ is a non-empty connected subtree of $T_A$. %; similarly for $v \in B \setminus (\widehat{S} \cup A)$.
If $v \in \widehat{S}_A \setminus B_W$, then $Y_v \cap V(T_A)$ is connected and contains $r_A$, so adding $r$ to $Y_v \cap V(T_A)$ does not affect connectivity.
After considering the symmetric cases, it remains to check $v \in \widehat{S}_A \cap \widehat{S}_B$. % = A \cap B$.
The set $Y_v$ is then a union of $\{r\}$ and two connected subtrees, each containing a child of $r$, so it is connected in $T$.

Now consider an edge $uv \in E(G[W])$.
Since $(A_W,B_W)$ is a {separation} of $G[W]$, \bmp{we have} $\{u,v\} \subseteq A_W$ or $\{u,v\} \subseteq B_W$ (or both).
Hence, there exists a node $t \in V(T_A) \cup V(T_B)$ so that  $\{u,v\} \subseteq \chi(t)$.

Next, let $v \in L_A$.
There is a unique leaf node $t_v \in V(T_A)$ for which $v \in \chi(t_v)$.
By the output specification, $L_A \subseteq A_W \setminus \widehat{S}_A$.
Hence, $v \not\in \chi(r)$ and $v \not\in \chi(t)$ for any $t \in V(T_B)$, so $t_v$ is the unique node in $T$ whose bag contains $v$; it remains a leaf after insertion of $r$.
The case $v \in L_B$ is analogous.

Finally, any leaf $t \in V(T)$ is either a leaf in $T_A$ or in $T_B$.
We get $G[\chi(t) \cap L] \in \hh$ as a~direct consequence of this property for  $(T_A, \chi_A, L_A)$ and $(T_B, \chi_B, L_B)$.

\subparagraph*{Running time and oracle calls.}
%\mic{This requires only polynomial space.}
We show that the number of processed instances of \textsc{Decompose} is $\Oh(n)$.
We say that an instance is basic if it does not create other instances recursively.
%It remains to show that the number of recursive calls is $\Oh(n)$.
For an instance $(G,k,S,W)$ we argue that the total number of non-basic nodes in its {recursion} tree is at most $|W \setminus S|$.
This holds trivially if the instance is basic.
In case (1)
 we {recurse} into two instances $(G,k,\widehat{S}_A,A_W)$, $(G,k,B,\widehat{S}_B, B_W)$ 
for some {separation} $(A_W,B_W)$ in $G[W]$
such that  $\widehat{S} \cap A_W \subseteq \widehat{S}_A$ and $\widehat{S} \cap B_W \subseteq \widehat{S}_B$. %$A \cap B \subseteq \widehat{S}_A \cup \widehat{S}_B$.
Then $(A_W \setminus \widehat{S}_A, B_W \setminus \widehat{S}_B)$ is a partition of a proper subset of $W \setminus S$ because $\widehat{S} \supsetneq S$.
This implies that $|A_W \setminus \widehat{S}_A|  + |B_W \setminus \widehat{S}_B| + 1 \le |W \setminus S|$.
In case (2) we also create two instances satisfying the specification above, but only the second one is being solved recursively and the first one is solved directly.
Therefore, the inequality above holds also in this case.
The claim follows by an induction on the depth of the recursion tree.
Finally, either the root instance is basic or the number of basic nodes is at most twice the number of non-basic ones.
As a consequence, the total number of nodes in the recursion tree is $\Oh(n)$.
This implies that the number of nodes in the computed decomposition is $\Oh(n)$.

A single execution of the algorithm from \cref{lem:subset:subroutine} takes time $\Oh(8^k \cdot k(n+m))$
and $\Oh(8^k)$ calls to the oracle $\mathcal{A}$.
The number of executions is bounded by the number of processed instances of \textsc{Decompose} plus the number of times we have refined $X$.
Since the size of $X$ can drop at most $n$ times, we infer that the number of executions is $\Oh(n)$.  
This concludes the proof of the theorem.
\end{proof}

\section{Applications}
\label{sec:applications}

We list several important corollaries from \cref{thm:intro:treewidth}.
For classes $\hh \in \{\mathsf{bipartite, interval}\}$ and any class $\hh$ defined by a finite family of forbidden induced subgraphs, we obtain single-exponential 5-approximations for computing $\hh$-treewidth.
For $\hh \in \{\mathsf{chordal, planar}\}$ the running-time dependency on $k = \hhtw(G)$ becomes $2^{\Oh(k\log k)}$.
For $\hh \in \{\mathsf{interval, planar}\}$ the dependency on the graph size is quadratic.

\begin{corollary}\label{cor:htw:apx}
Each of the following graph classes $\hh$ admits an algorithm
that, given an $n$-vertex $m$-edge graph $G$ and integer $k$, runs in time $f_{\hh}(k,n,m)$, and either computes a tree $\hh$-decomposition of $G$ of width at most $5k+5$ or correctly concludes that $\hhtw(G) > k$.
The function $f_{\hh}(n,m,k)$ is specified as follows.\bmpr{Do we want to include $\mathcal{H} = K_t$-free? We should be able to get single-exponential which is ETH-tight.}
\begin{itemize}
    \item $\hh = \mathsf{bipartite}: \Oh(72^k\cdot  n^2(n+m))$,
    \item $\hh = \mathsf{interval}:  \Oh(8^{3k}\cdot 
 n(n+m))$,
    \item $\hh = \mathsf{planar}: 2^{\Oh(k\log k)}\cdot  n(n+m)$,
    \item $\hh = \mathsf{chordal}:  2^{\Oh(k\log k)}\cdot n^{\Oh(1)}$,
    \item any $\hh$ defined by a finite family of forbidden induced subgraphs on at most $c$ vertices:\\ $f_{\hh}(n,m,k) = (8c^2)^k\cdot n^{\Oh(1)}$.
\end{itemize}
\end{corollary}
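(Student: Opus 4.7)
The proof plan is to apply \cref{thm:intro:treewidth} case-by-case, instantiating the oracle $\mathcal{A}$ with the fastest known FPT algorithm for \hhdel parameterized by solution size. Before doing so I would verify the preconditions of the theorem: every listed class must be hereditary and closed under disjoint union. Hereditariness is immediate for $\mathsf{bipartite}$, $\mathsf{interval}$, $\mathsf{chordal}$, and $\mathsf{planar}$ graphs, as well as for any class defined by forbidden induced subgraphs. Closure under disjoint union is equally evident: a disjoint union of two bipartite (resp.\ interval, chordal, planar) graphs remains in the class, and a disjoint union of two $\ff$-induced-subgraph-free graphs contains no induced copy of any connected $F \in \ff$; disconnected members of $\ff$ can be replaced by their connected components without changing the class.

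Given the theorem, the arithmetic in each case is a straightforward substitution: at parameter $2k+2$ the oracle runs in some time $t_{\hh}(2k+2,n,m)$ per call, there are $\Oh(8^k n)$ such calls, and there is an intrinsic overhead $\Oh(8^k \cdot kn(n+m))$ from the outer algorithm. For $\hh = \mathsf{bipartite}$ I would plug in the Reed-Smith-Vetta algorithm for \textsc{Odd Cycle Transversal} running in $\Oh(3^k \cdot kmn)$, so one call costs $\Oh(9^k \cdot knm)$ and the total over $\Oh(8^k n)$ calls is $\Oh(72^k \cdot n^2(n+m))$. For $\hh = \mathsf{interval}$ I would use Cao's $\Oh(8^k(n+m))$-time \textsc{Interval Vertex Deletion} algorithm, yielding a per-call cost of $\Oh(8^{2k}(n+m))$ and total $\Oh(8^{3k} \cdot n(n+m))$. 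For $\hh = \mathsf{planar}$ I would use the Jansen-Lokshtanov-Saurabh $2^{\Oh(k\log k)} \cdot n$-time \textsc{Vertex Planarization} algorithm, and for $\hh = \mathsf{chordal}$ the Cao-Marx $2^{\Oh(k\log k)} \cdot n^{\Oh(1)}$ algorithm for \textsc{Chordal Vertex Deletion}; in both cases the $8^k$ factor is absorbed into $2^{\Oh(k\log k)}$.

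For the generic case where $\hh$ is defined by a finite family $\ff$ of forbidden induced subgraphs of size at most $c$, the plan is to use the textbook bounded-search-tree algorithm for \hhdel: while the input contains an induced copy of some $F \in \ff$ (found in time $n^{\Oh(c)}$), branch on the $\le c$ choices of which vertex of the copy to delete. This produces a search tree with at most $c^k$ leaves and runs in time $c^k \cdot n^{\Oh(1)}$. At parameter $2k+2$ the per-call cost is $(c^2)^k \cdot n^{\Oh(1)}$, and combined with the $\Oh(8^k n)$ oracle invocations gives the claimed $(8c^2)^k \cdot n^{\Oh(1)}$. No step of this plan presents a genuine obstacle; the only point that requires care is that \cref{thm:intro:treewidth} needs the oracle to \emph{return} a witness deletion set of size at most $2k+2$ when one exists, rather than merely decide, but each of the algorithms invoked above is constructive in that sense. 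The rest is mechanical bookkeeping to check that the intrinsic $\Oh(8^k \cdot kn(n+m))$ overhead from \cref{thm:intro:treewidth} is dominated by the oracle term in every entry of the table.
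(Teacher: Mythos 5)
Your proposal is correct and matches the paper's proof essentially verbatim: both instantiate Theorem~\ref{thm:intro:treewidth} with the same oracle algorithms (Reed--Smith--Vetta for bipartite, Cao for interval, Jansen--Lokshtanov--Saurabh for planar, Cao--Marx for chordal, and Cai's branching algorithm for finitely many forbidden induced subgraphs) and perform the same substitution of $s = 2k+2$ into the running times. Your added remarks on verifying the hereditary/union-closed preconditions and on needing a constructive oracle mirror the paper's note that the chordal decision algorithm must be made constructive by self-reduction.
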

\begin{proof}
Let $g_{\hh}(n,m,s)$ describe the running time for solving \hhdel parameterized by the solution size $s$.
\cref{thm:intro:treewidth} yields an algorithm for approximating 
$\hh$-treewidth with running time 
$\Oh(8^k  kn(n+m)) + \Oh(8^kn) \cdot g_{\hh}(n,m,2k+2)$.
We check the state-of-the-art running times for \hhdel.
\begin{itemize}
    \item $\hh = \mathsf{bipartite}: \Oh(3^s\cdot sn(n+m))$~\cite{ReedSV04} (cf.~\cite[Thm. 4.17]{CyganFKLMPPS15}), 
    \item $\hh = \mathsf{interval}: \Oh(8^s\cdot(n+m))$~\cite{CaoK16},
    \item $\hh = \mathsf{planar}: 2^{\Oh(s\log s)}\cdot n$~\cite{DBLP:conf/soda/JansenLS14},
    \item $\hh = \mathsf{chordal}: 2^{\Oh(s\log s)}\cdot n^{\Oh(1)}$~\cite{CaoM16}.
\end{itemize}
The algorithm for $\hh = \mathsf{chordal}$ is presented for the decision version \hhdel, but it can easily be transformed into an algorithm which constructs a minimum solution if there exists one of size at most~$k$ by a self-reduction (we do not care about the dependency on $n$ in this case).
When $\hh$ is defined by a finite family of forbidden induced subgraphs on at most $c$ vertices, \hhdel is solvable in time $c^s\cdot n^{\Oh(1)}$~\cite{Cai96}. 
It suffices to plug these running times into \cref{thm:intro:treewidth}.
\end{proof}

Next, we derive faster algorithms for \hhdel under the parameterization by \hhtwfull{}.
The obtained running times for $\hh \in \{\mathsf{bipartite, planar}\}$ match the running times under  the parameterization by treewidth, which are known to be ETH-tight.

\begin{corollary}\label{cor:htw:hhdel}
The following graph classes $\hh$ admit algorithms for \hhdel parameterized by $k = \hhtw(G)$ with the running time $f_{\hh}(k) \cdot n^{\Oh(1)}$, where the function $f_{\hh}$ is specified as follows.
\begin{itemize}
    \item $\hh = \mathsf{bipartite}: 2^{\Oh(k)}$,
    \item $\hh = \mathsf{planar}: 2^{\Oh(k\log k)}$,
    \item $\hh = \mathsf{chordal}:  2^{\Oh(k^2)}$,
    \item any $\hh$ defined by a finite family of forbidden induced subgraphs on at most $c$ vertices:\\ $f_{\hh}(k) = 2^{\Oh(k^{2c})}$.
\end{itemize}
\end{corollary}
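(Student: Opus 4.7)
The plan is straightforward: combine the approximate tree $\hh$-decompositions produced by Corollary~\ref{cor:htw:apx} with existing dynamic programming algorithms that solve \hhdel{} when a tree $\hh$-decomposition of width $w$ is supplied in the input, and observe that because our width guarantee is $w = 5k+5 \in \Oh(k)$, plugging $w$ into each DP running time preserves the asymptotic exponent dependence on $k$.

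Concretely, I would first run the algorithm of Corollary~\ref{cor:htw:apx} on the input graph $G$ with the promised bound $k$ on $\hhtw(G)$; if $k$ is not known, iterative doubling $k = 1,2,4,\dots$ raises the total cost by only a constant factor because the running times are at least singly-exponential in $k$. For each of the four classes $\hh$, this step runs within the target time and produces a tree $\hh$-decomposition $(T,\chi,L)$ of width at most $5k+5$. Then I would invoke the corresponding known DP solver for \hhdel{} on such a decomposition: for $\hh=\mathsf{bipartite}$ an \textsc{Odd Cycle Transversal} DP in time $2^{\Oh(w)}\cdot n^{\Oh(1)}$; for $\hh=\mathsf{planar}$ a \textsc{Vertex Planarization} DP in time $2^{\Oh(w\log w)}\cdot n^{\Oh(1)}$; for $\hh=\mathsf{chordal}$ a \textsc{Chordal Vertex Deletion} DP in time $2^{\Oh(w^2)}\cdot n^{\Oh(1)}$; and for any~$\hh$ defined by a finite family of forbidden induced subgraphs on at most $c$ vertices, an \hhdel{} DP in time $2^{\Oh(w^{2c})}\cdot n^{\Oh(1)}$. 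Substituting $w \in \Oh(k)$ then yields the four claimed bounds.

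These DPs already appear in the prior literature on tree $\hh$-decompositions (see e.g.\ \cite{JansenK21, JansenKW21, JansenKW21-arxiv}); structurally they mirror the classical tree-decomposition DPs on the non-leaf nodes of $(T,\chi,L)$, while handling each leaf node, whose base component $H \in \hh$ has boundary of size at most $w+1$, by enumerating the interaction of a hypothetical optimum with the boundary and using the solution-size FPT solver for \hhdel{} on $H$ augmented by this guess. This is exactly where union-closedness and heredity of $\hh$ are used to glue partial solutions together consistently.

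The main conceptual point, and the reason this corollary is a genuine improvement, is that obtaining width $w \in \Oh(k)$ rather than $w \in k^{\Oh(1)}$ is precisely what keeps the DP exponents linear, $k\log k$, $k^2$, or $k^{2c}$ in $k$, instead of being blown up by a polynomial in $k$ in the exponent. Hence there is no new algorithmic content in the corollary beyond Corollary~\ref{cor:htw:apx}; the real work is already done by \cref{thm:intro:treewidth}, and this proof only needs to cite the existing DPs for each class and verify the arithmetic of substituting $w = \Oh(k)$.
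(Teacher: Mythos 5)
Your proposal is correct and matches the paper's proof: the paper likewise just invokes \cref{cor:htw:apx} to obtain a width-$\Oh(k)$ tree $\hh$-decomposition and then runs the known dynamic-programming algorithms for \hhdel{} on a given tree $\hh$-decomposition from~\cite{JansenKW21-arxiv}, substituting $d = \Oh(k)$ into their running times. The extra details you supply (iterative doubling over $k$ and a sketch of how the leaf bags are handled) are harmless but not needed beyond the citation.
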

\begin{proof}
By \cref{cor:htw:apx}, we can construct a tree $\hh$-decomposition of width $\Oh(k)$ within the claimed running time.
Each considered \hhdel{} problem can be solved in time $f_{\hh}(d) \cdot n^{\Oh(1)}$, where $d$ is
the width of a~given tree $\hh$-decomposition~\cite{JansenKW21-arxiv}.
\end{proof}

We have not considered $\hh = \mathsf{interval}$ here because the known algorithm working on a given tree $\hh$-decomposition~\cite{JansenKW21-arxiv} runs in time  $2^{\Oh(d^c)} \cdot n^{\Oh(1)}$ for a large constant $c$, which gives a prohibitively high running time even combined with our 5-approximation for $\hh$-treewidth.

\section{\texorpdfstring{Approximating $\hh$-elimination distance}{Approximating H-elimination distance}} \label{sec:treedepth}

We switch our attention to a different kind of a hybrid graph measure, namely $\hh$-elimination distance. 
We only provide the definition of the measure itself since we do not work with the corresponding decompositions directly. See \cite{JansenKW21} for more details.

\begin{definition}
For a hereditary graph class $\hh$ and a graph $G$, the \hhdepthfull{} of $G$, denoted $\hhdepth(G)$, is defined recursively as follows.
\begin{equation*}
    \hhdepth(G) = \begin{cases}
        0 & \mbox{if $G$ is connected and $G \in \hh$} \\
        1 + \min_{v \in V(G)}(\hhdepth(G-v)) & \mbox{if $G$ is connected and $G \not\in \hh$} \\
         \max_{i=1}^d \hhdepth(G_i) & \mbox{if $G$ is disconnected and $G_1, \dots G_d$ are its components}
    \end{cases}
\end{equation*}
A tree structure which encodes this recursion (not necessarily of optimal depth) is called an $\hh$-elimination forest of $G$.
\bmp{For $\hh$ consisting of only the 0-vertex graph, \bmp{$\hhdepth(G)$} is the treedepth of~$G$ and the corresponding structure is a (standard) elimination forest.}
\end{definition}

We exploit the concept of redundancy to improve the bottleneck of the existing approximation algorithms for computing a decomposition of small depth~\cite{JansenKW21}.
This bottleneck involves \mic{repeatedly} finding an $(\hh,k)$-separation that weakly covers a given vertex set $Z$.
Such a subroutine is used to detect subgraphs which potentially may be turned into base components.
As this approach is not aimed at constructing an $\hh$-elimination forest of an~optimal depth, one can relax the constraint on the neighborhood size and seek an $(\hh, k')$-separation, \mic{where $k'$ is upper bounded in terms of $k$.} 

We define the \textsc{$\hh$-Weak Coverage} problem, where the input consists of a graph $G$, a non-empty set $Z \subseteq V(G)$, and an integer $k$.
An $\alpha$-approximate algorithm for \textsc{$\hh$-Weak Coverage} should either return an $(\hh,\alpha k)$-separation weakly covering $Z$ or conclude that there is no $(\hh,k)$-separation weakly covering $Z$. 

Jansen, de Kroon, and W\l odarczyk~\cite{JansenKW21} showed that an FPT approximation algorithm for \textsc{$\hh$-Weak Coverage} implies an FPT algorithm for constructing an $\hh$-elimination forest of approximately optimal width.
In fact, they worked with a version of \textsc{$\hh$-Weak Coverage} with stronger assumptions: that the graph $G$ should have bounded $\hh$-treewidth, that $G[Z]$ should be connected, and the algorithm could report a failure \mic{already when} there is no `{strong}' coverage for $Z$, i.e., there is no $(\hh,k)$ separation $(C,S)$ with $Z \subseteq C$
(whereas in weak coverage we have $Z \subseteq C \cup S$).
Remarkably, we do not need these assumptions.
We reformulate the original lemma to be consistent with the definition of the \textsc{$\hh$-Weak Coverage} problem and our oracle-based formalism.

\begin{lemma}[{\cite[Lem. 3.3]{JansenKW21}}]
\label{lem:separation-finding:motivation}
Let $\hh$ be a hereditary and union-closed class of graphs.
There is an algorithm that, using oracle-access to an $\Oh(1)$-approximate algorithm~$\mathcal{B}$ for \textsc{$\hh$-Weak Coverage}, takes as input an $n$-vertex graph $G$ with $\hh$-elimination distance at most~$k$,
runs in time $n^{\Oh(1)}$, 
makes $n^{\Oh(1)}$ calls to $\mathcal{B}$ on graph $G$ and parameter $k$,
and returns an $\hh$-elimination forest of $G$ of depth\footnote{We remark that the exponent $\frac 3 2$ at $(\log k)$ is missing in the statement from the conference version of the article~\cite{JansenKW21}.
This factor comes directly from the best-known polynomial-time approximation algorithm for treedepth~\cite{CzerwinskiNP19}.} $\Oh(k^3\log^{3 / 2} k)$.

{Under the same assumptions, there is an algorithm that
runs in time $2^{\Oh(k^2)} \cdot n^{\Oh(1)}$,
makes $n^{\Oh(1)}$ calls to $\mathcal{B}$ on graph $G$ and parameter $k$,
and returns an $\hh$-elimination forest of $G$ of depth $\Oh(k^2)$.}
\end{lemma}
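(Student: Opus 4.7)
This lemma is essentially a restatement of [JansenKW21, Lem.\,3.3], so my plan is to follow their recursive construction and verify that it still goes through under the weaker oracle considered here. The high-level strategy for both statements is identical: build the $\hh$-elimination forest top-down, at each step placing a small ``core'' vertex set at the top, recursing on the connected components of what remains, and ensuring that every leaf corresponds to a subgraph in $\hh$. The core at each step is assembled from two ingredients: (i) separators $S$ arising from $(\hh, \Oh(k))$-separations obtained by invoking $\mathcal{B}$ on well-chosen seed sets $Z \subseteq V(G)$, and (ii) an elimination forest of the ``residual skeleton'' that remains after peeling off potential base components, computed by an off-the-shelf treedepth routine.

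\textbf{Using the oracle.} To identify candidate base components, I would enumerate seed sets $Z$ drawn from a preliminary (approximate) elimination forest of $G$ or from BFS-layerings, and call $\mathcal{B}(G,Z,k)$ on each. A positive answer yields an $(\hh, \Oh(k))$-separation $(C,S)$ weakly covering $Z$; the vertices of $S$ are promoted to the top of the forest under construction, the subgraph $G[C]$ belongs to $\hh$ and can be attached as a base component, and the algorithm recurses on $G-(C \cup S)$. When no candidate seed yields a successful call, the remaining graph is treated as a skeleton whose treedepth is at most $k$, as follows directly from $\hhdepth(G) \le k$. For the $\Oh(k^3 \log^{3/2} k)$ bound I would invoke the polynomial-time $\Oh(\log^{3/2}t)$-approximation for treedepth of Czerwi\'nski, Nadara, and Pilipczuk~\cite{CzerwinskiNP19} on this skeleton, giving a multiplicative $\log^{3/2}k$ overhead at each of the $\Oh(k)$ recursive levels and thus the claimed depth after accounting for up to $k$ iterations of peeling. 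For the $\Oh(k^2)$ bound I would instead compute treedepth exactly in $2^{\Oh(k^2)}\cdot n^{\Oh(1)}$ time, eliminating the logarithmic slack and losing only a single $k$-factor across the recursion.

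\textbf{Adapting the assumptions.} The nontrivial part is arguing that the three restrictions present in [JansenKW21]---bounded $\hh$-treewidth of $G$, connectedness of $G[Z]$, and \emph{strong} rather than weak coverage---can be removed. The bounded-$\hhtw$ hypothesis was used there only to guarantee small separators in the residual graph, but the weaker $\hhdepth(G) \le k$ bound already provides standard elimination forests of depth at most $k$ on the skeleton, which is all that the separator arguments actually consume. Connectedness of $G[Z]$ was used to attach each returned $(\hh,k)$-separation to a unique position in the partial elimination forest; when $G[Z]$ is disconnected we simply apply the same argument to each connected component of $G[Z]$ independently. Finally, weak coverage suffices because any vertex of $Z$ that lands in $S$ rather than in $C$ is precisely a vertex we were going to place at the top of the current level of the forest anyway, so the loss of the ``$Z \subseteq C$'' guarantee affects bookkeeping (which branch of the recursion a vertex of $Z$ descends into) but not the depth calculus.

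\textbf{Main obstacle.} The main technical hurdle is re-verifying that progress is made at every recursive call under the relaxed oracle. In the original argument the strong coverage guarantee $Z \subseteq C$ was used to certify that at least one designated seed is eliminated into a base component, yielding a clean decrease in a seed-based potential. Under weak coverage that seed may fall in $S$ instead, so I would replace the seed-based termination argument with one based on the strictly decreasing quantity $|V(G')|$ of the recursive subinstance $G' = G-(C \cup S)$, and then re-derive the depth accounting separately for the two regimes (polynomial-time approximation versus exact computation of skeleton treedepth) to confirm the final bounds $\Oh(k^3 \log^{3/2}k)$ and $\Oh(k^2)$, respectively.
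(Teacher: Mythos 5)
The first thing to observe is that the paper does not prove this lemma at all: it is imported verbatim from \cite[Lem.~3.3]{JansenKW21}, and the only thing that needs checking is that the oracle interfaces are compatible. The original lemma consumes an oracle that is allowed to assume extra structure on its input ($G$ of bounded $\hh$-elimination distance, $G[Z]$ connected) and is allowed to report failure already when no \emph{strong} coverage $Z \subseteq C$ exists. All three relaxations make the oracle's task easier, so any correct algorithm for the unrestricted \textsc{$\hh$-Weak Coverage} problem --- in particular the one supplied by \cref{lem:separation-finding:algorithm} --- is automatically a valid oracle for the restricted variant, and the original lemma applies as a black box. The sentence ``remarkably, we do not need these assumptions'' refers to the implementability of the oracle, not to the lemma that consumes it. Your ``Adapting the assumptions'' and ``Main obstacle'' paragraphs therefore attack a non-problem, and in the wrong direction: you treat the weakening as something the elimination-forest construction must be made robust to, whereas the oracle provided here is \emph{stronger} than the one the cited proof already uses, so no re-verification of progress or termination is required.

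Taken as a standalone re-derivation of the cited result, your sketch also contains a concrete error. The residual skeleton does not have treedepth at most $k$; from $\hhdepth(G) \le k$ one only gets depth $k$ for the non-base part of an \emph{optimal} elimination forest, while the skeleton your algorithm actually produces additionally absorbs the $\Oh(k)$-sized separators of the approximate base components peeled off along the way, and the bound one can prove is treedepth $\Oh(k^2)$ (with treewidth $\Oh(k)$). This is precisely where the two stated bounds come from: exact treedepth computation on a depth-$\Oh(k^2)$ skeleton costs $2^{\Oh(k^2)} \cdot n^{\Oh(1)}$ and yields depth $\Oh(k^2)$, while the polynomial-time approximation of \cite{CzerwinskiNP19} yields $\Oh(k^3\log^{3/2}k)$. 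Note that your own accounting ($\log^{3/2}k$ overhead over $\Oh(k)$ levels and $k$ peeling iterations) produces only $\Oh(k^2\log^{3/2}k)$ and does not reproduce the claimed bound --- a symptom of the incorrect skeleton-depth claim.
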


The difference between the two statements is caused by the usage of different algorithms for constructing an elimination forest (either approximate or exact) which occurs after a preliminary decomposition of the graph is constructed with the help of the algorithm $\mathcal{B}$.
The advantage of the first algorithm is that,
besides the calls to $\mathcal{B}$, its running time is polynomial.
As a consequence, it requires only polynomial space as long as  $\mathcal{B}$ runs in polynomial space.
%This is noted in the first algorithm in \cref{thm:intro:elimination-distance}.

A 2-approximation algorithm for \textsc{$\hh$-Weak Coverage} already follows from the proof of \cref{lem:subset:subroutine} but we present it below in a stand-alone form.
%We show how to solve $\hhsepfind$ with $h(k)=2k$ for a broad spectrum of graph classes.
In the previous work, the linear dependence on $k$ was obtained only in two special cases \bmp{(bipartite graphs, or graphs defined by a finite family of forbidden induced subgraphs)} while
for the remaining graph classes for which \hhdel is FPT by the solution size, an algorithm was given that returns an $(\hh,\Oh(k^2))$-separation.

\begin{lemma}\label{lem:separation-finding:algorithm}
Let $\hh$ be a hereditary and union-closed class of graphs. 
Assuming oracle-access to an algorithm~$\mathcal{A}$ for \hhdel, \textsc{$\hh$-Weak Coverage} admits a 2-approximate algorithm that, given an $n$-vertex $m$-edge graph $G$ and integer $k$, runs in time $\Oh(kn(n+m))$, and
makes at most $n$ calls to $\mathcal{A}$ on induced subgraphs of $G$ and parameter $2k$.
\end{lemma}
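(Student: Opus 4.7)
The plan is to run a win/win loop analogous to the proof of \cref{lem:subset:subroutine}, stripped down to a single flow-plus-oracle step. I would maintain a global $\hh$-deletion set $X \subseteq V(G)$, initialized to $X := V(G)$ (which is valid because the empty graph belongs to every union-closed class as the empty disjoint union). Each iteration invokes the Ford--Fulkerson subroutine of \cref{cor:subset:ford} on the sets $Z$ and $X$ with threshold $2k$ and branches on whether $\lambda_G(Z,X) \leq 2k$.

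If $\lambda_G(Z,X) \leq 2k$, the subroutine returns a separation $(A,B)$ in $G$ with $Z \subseteq A$, $X \subseteq B$, and $|A \cap B| \leq 2k$. Set $C := A \setminus B$ and $S := A \cap B$. Then $|S| \leq 2k$, $N_G(C) \subseteq S$ by the separation property, and $C$ is disjoint from the $\hh$-deletion set $X$, so $G[C]$ is an induced subgraph of $G-X \in \hh$ and hence belongs to $\hh$ by hereditariness. Since $Z \subseteq A = C \cup S$, the pair $(C,S)$ is an $(\hh,2k)$-separation weakly covering $Z$, and the algorithm outputs it.

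If instead $\lambda_G(Z,X) \geq 2k+1$, the subroutine returns a family of $2k+1$ vertex-disjoint $(Z,X)$-paths. Let $X' \subseteq X$ be the set of their $X$-endpoints, so $|X'| = 2k+1$ and $\lambda_G(Z,X') = 2k+1$. I would then invoke $\mathcal{A}$ on the induced subgraph $G - (X \setminus X')$ with parameter $2k$. If $\mathcal{A}$ returns an $\hh$-deletion set $Y$ of size at most $2k$, then $(X \setminus X') \cup Y$ is a strictly smaller $\hh$-deletion set in $G$, so I replace $X$ by it and iterate. Otherwise, the algorithm reports that no $(\hh,k)$-separation weakly covers $Z$.

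The main step to verify is the correctness of this negative conclusion. Arguing by contradiction, if $Z$ were weakly $(\hh,k)$-separable, then applying \cref{lem:subset:redundant} with $\ell = k$ to the triple $(X, Z, X')$ would show that $X'$ is redundant in $X$, meaning $G - (X \setminus X')$ admits an $\hh$-deletion set of size strictly less than $|X'| = 2k+1$, hence at most $2k$ --- contradicting the reported output of $\mathcal{A}$. For the resource bounds, each second-case iteration strictly decreases $|X|$, so the loop terminates after at most $n$ iterations (the first case necessarily triggers no later than when $X = \emptyset$). Each iteration performs one $\Oh(k(n+m))$-time Ford--Fulkerson call and at most one oracle call to $\mathcal{A}$ on an induced subgraph of $G$ with parameter $2k$, yielding the claimed $\Oh(kn(n+m))$ total running time and at most $n$ calls to $\mathcal{A}$.
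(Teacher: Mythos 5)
Your proposal is correct and follows essentially the same route as the paper's proof: maintain an $\hh$-deletion set $X$ starting from $V(G)$, use Ford--Fulkerson to either extract an $(\hh,2k)$-separation from a small $(Z,X)$-separator or obtain $2k+1$ disjoint $(Z,X)$-paths whose $X$-endpoints form a candidate redundant set, and invoke \cref{lem:subset:redundant} with $\ell=k$ to justify the failure report. All the key verifications (disjointness of $C$ from $X$, strict decrease of $|X|$, the at-most-$n$ iteration bound) match the paper's argument.
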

\begin{proof}
{Consider an input~$(G,Z,k)$ to the \textsc{$\hh$-Weak Coverage} problem.} 
During the algorithm, we maintain an $\hh$-deletion set $X$, initialized as $X = V(G)$.
We repeatedly perform the following.
First, we apply \cref{cor:subset:ford} to check whether $\lambda_G(Z,X) \le 2k$ in time $\Oh(k(n+m))$.
If yes, we obtain a $(Z,X)$-separator $S$.
The set $C$, comprising vertices reachable from $Z \sm S$ in $G-S$, is disjoint from $X$ hence $G[C]$ is an induced subgraph of $G-X$ and so it belongs to $\hh$.
Therefore $(C,S)$ forms a $(\hh,2k)$-separation.
We also have $Z \subseteq C \cup S$ so $(C,S)$ weakly covers $Z$ and we can return it as a solution.

Now suppose that $\lambda_G(Z,X) \ge 2k + 1$.
Then by \cref{cor:subset:ford} we obtain
a family of $2k+1$ vertex-disjoint $(Z,X)$-paths.
Let $X' \subseteq X$ be the set of $X$-endpoints of these paths.
Then $|X'| = 2k+1$ and $\lambda_G(Z,X') = 2k + 1$.
We execute the algorithm $\mathcal{A}$ on the graph $G - (X \setminus X')$ and parameter $2k$.
When $Z$ is weakly $(\hh,k)$-separable then, by \cref{lem:subset:redundant}, the set $X'$ is redundant in $X$.
In this case the algorithm $\mathcal{A}$ will find a set $X'' \subseteq V(G)$ of size at most $2k = |X'| - 1$ such that $(X \setminus X') \cup X''$ is also an $\hh$-deletion set in $G$.
We set $X \leftarrow (X \setminus X') \cup X''$ and continue this process.
Observe that in each iteration the size of $X$ decreases so after at most $n$ steps we either arrive at the scenario $\lambda_G(Z,X) \le 2k$ (then we find a solution) or $\lambda_G(Z,X) \ge 2k + 1$ but the call to  $\mathcal{A}$ fails to find a local improvement.
\cref{lem:subset:redundant} implies that then \bmp{there is no $(\hh,k)$-separation weakly covering~$Z$} and so we can report a failure.
\end{proof}

Combining \cref{lem:separation-finding:motivation} with \cref{lem:separation-finding:algorithm} yields \cref{thm:intro:elimination-distance}.

\section{Conclusion} \label{sec:conclusion}

We contributed to the algorithmic theory of hybrid graph parameterizations, by showing how a 5-approximation to $\hhtw$ can be obtained using an algorithm for the solution-size parameterization of \hhdel as a black box. This makes the step of computing a tree $\hh$-decomposition now essentially as fast as that of solving \hhdel parameterized by solution size. Our new decomposition algorithm combines with existing algorithms to solve \hhdel{} on a given tree $\hh$-decomposition, to deliver algorithms that solve \hhdel{} parameterized by $\hhtw$. For \textsc{Odd Cycle Transversal} and \textsc{Vertex Planarization}, the parameter dependence of the resulting algorithm is equal to the worst of the parameter dependencies of the solution-size and treewidth-parameterizations. We believe that this is not a coincidence, and offer the following conjecture.

\begin{conjecture}
Let~$\hh$ be a hereditary and union-closed graph class. \bmp{If \hhdel{} \mic{can} be solved in time~$f(s) \cdot n^{\Oh(1)}$ parameterized by solution size~$s$, and in time~$h(w) \cdot n^{\Oh(1)}$ parameterized by treewidth~$w$, then \hhdel{} can be solved in time~$\left (f(\Oh(k)) + h(\Oh(k) \right)) \cdot n^{\Oh(1)}$ parameterized by $\hh$-treewidth~$k$.}
\end{conjecture}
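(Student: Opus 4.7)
The plan is to reduce the problem to its two natural special cases via a tree $\hh$-decomposition. First, apply \cref{thm:intro:treewidth} to compute in $f(\Oh(k))\cdot n^{\Oh(1)}$ time a tree $\hh$-decomposition $(T,\chi,L)$ of width $w=\Oh(k)$; this already accounts for the $f(\Oh(k))$ summand of the target running time. Second, solve \hhdel{} on $G$ using this decomposition in time $h(\Oh(k))\cdot n^{\Oh(1)}$, via a bottom-up dynamic program whose state at a node $t$ records, for every subset $D$ of the $\Oh(k)$ interface vertices $\chi(t)\setminus L$, the minimum total number of deletions inside the subtree rooted at $t$ that is consistent with $D$ being precisely the deleted part of the interface.

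For internal (non-base) nodes, the DP tables are combined by the standard introduce/forget/join operations in $2^{\Oh(k)}\cdot n^{\Oh(1)}$ time, which fits within the $h(\Oh(k))\cdot n^{\Oh(1)}$ budget as long as $h(k)\geq 2^{\Omega(k)}$. The essential new work occurs at the $\Oh(n)$ base-component leaves: for each pattern $D\subseteq I$, one needs the minimum $|X|$ with $X\subseteq C$ such that $G[(C\setminus X)\cup(I\setminus D)]\in\hh$, where $G[C]\in\hh$ and $|I|=\Oh(k)$. The aim is to compute these $n\cdot 2^{\Oh(k)}$ values either (i) all at once by handing a globally derived instance of treewidth $\Oh(k)$ to $\mathcal{A}_{tw}$, which folds the $h(\Oh(k))\cdot n^{\Oh(1)}$ cost into a single call, or (ii) one pattern at a time by calling $\mathcal{A}_s$ on a boundary-annotated instance with solution-size parameter $\Oh(k)$ (the optimum is bounded by $|I|\leq\Oh(k)$), so that the total cost is absorbed in the $f(\Oh(k))\cdot n^{\Oh(1)}$ term.

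The main obstacle is the annotation step: \hhdel{} as given only returns unconstrained optima, whereas the DP requires solutions forced to contain $D$ and to avoid $I\setminus D$. A generic black-box reduction appears to require small $\hh$-gadgets $\Gamma_v\in\hh$ that can be glued onto a vertex $v$ to make deleting $v$ (respectively keeping $v$) mandatory in any sufficiently small $\hh$-deletion set. Such gadgets are available in the concrete cases where the conjecture has been verified (bipartite graphs, planar graphs, and classes defined by finitely many forbidden induced subgraphs), but a uniform construction for arbitrary hereditary and union-closed $\hh$---presumably leveraging union-closure together with the existence of arbitrarily large $\hh$-members to amplify enforcement costs---appears to be the crux. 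I expect that any proof of the conjecture would either build such uniform gadgets from the two closure properties, or bypass them by encoding the entire boundary-indexed DP as a single auxiliary \hhdel{} instance of treewidth $\Oh(k)$ that $\mathcal{A}_{tw}$ can solve in one shot; either way, the bulk of the work lies in the base-component subroutine, not in the decomposition or the treewidth DP themselves.
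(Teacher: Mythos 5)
This statement is an open \emph{conjecture} in the paper: the authors offer it in the conclusion precisely because they do not know how to prove it, and they explicitly identify the remaining bottleneck as ``using the tree $\hh$-decomposition to compute a minimum $\hh$-deletion set'' as efficiently as on a standard tree decomposition. Your proposal is therefore not comparable to a proof in the paper, and, to your credit, it is honest about this: you locate the difficulty in exactly the same place (the base-component subroutine with boundary annotations) and leave it unresolved, phrasing the key step as what ``any proof of the conjecture would'' need to do. As written, this is an analysis of the obstruction, not a proof.

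Beyond the admitted gap, both escape routes you sketch have concrete problems. Route (i) --- encoding the boundary-indexed DP as ``a single auxiliary instance of treewidth $\Oh(k)$'' for $\mathcal{A}_{tw}$ --- presupposes a way to compress the base components, but these are exactly the parts of the graph whose treewidth is unbounded (that is the whole point of $\hh$-treewidth), so no such instance is available without already solving the hard part. Route (ii) --- calling $\mathcal{A}_s$ with parameter $\Oh(k)$ on an annotated base component --- fails because the quantity the DP needs, namely the minimum $|X|$ with $X \subseteq C$ such that $G[(C\setminus X)\cup(I\setminus D)] \in \hh$, is \emph{not} bounded by $|I|$: the cheap solution of size at most $|I\setminus D|$ deletes precisely the boundary vertices that the pattern $D$ forbids deleting, and once those are forced to stay the constrained optimum can be as large as $\Omega(|C|)$ (e.g.\ for $\hh = \mathsf{bipartite}$, a single retained boundary vertex with neighbors on both sides of a connected bipartite $C$ may require deleting a large vertex cut of $C$). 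So the per-leaf cost cannot be absorbed into $f(\Oh(k))\cdot n^{\Oh(1)}$ by a bounded-parameter oracle call, and the conjecture remains open after your argument.
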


The conjecture is a significant strengthening of the equivalence, with respect to non-uniform fixed-parameter tractability, between solving $\hhdel{}$ parameterized by solution size and computing $\hhtw$ given by Agrawal et al.~\cite{AgrawalKLPRSZ22}. It essentially states that there is no \emph{price of generality} to pay for using the hybrid parameterization by $\hhtw$. After three decades in which the field of parameterized complexity has focused on parameterizations by solution size, this would lead to a substantial shift of perspective. We believe Theorem~\ref{thm:intro:treewidth} is an important ingredient in this direction. 

To understand the relative power of the parameterizations by solution size, treewidth, and $\hh$-treewidth, the remaining bottleneck lies in using the tree $\hh$-decomposition to compute a minimum $\hh$-deletion set. Can the latter be done as efficiently when using a tree $\hh$-decomposition as when using a standard tree decomposition? For problems like \textsc{Odd Cycle Transversal} and \textsc{Vertex Planarization}, this is indeed the case. But when the current-best dynamic-programming algorithm over a tree decomposition uses advanced techniques, it is currently not clear how to lift such an algorithm to work on a tree $\mathcal{H}$-decomposition. Can \hhdel{} for $\hh$ the class of interval graphs be solved in time~$2^{\Oh(k \log k)} \cdot n^{\Oh(1)}$ parameterized by~$\hhtw$? Such a running time can be obtained for the parameterization by treewidth by adapting the approach of Saitoh, Yoshinaka, and Bodlaender~\cite{SaitohYB21}.

While we have not touched on the subject here, we expect our ideas to also be applicable when~$\hh$ is a \emph{scattered graph class}, i.e., when~$\hh$ 
consists of graphs where each connected component is contained in one of a finite number of graph classes~$\mathcal{H}_1, \ldots, \mathcal{H}_t$. It is known~\cite{JansenKW21} that, when \textsc{Vertex Cover} can be solved in polynomial time on each graph class~$\mathcal{H}_i$, then \textsc{Vertex Cover} is FPT parameterized by the width of a given tree $\mathcal{H}$-decomposition. We expect that \cref{thm:intro:treewidth} can be generalized to work with scattered graph classes $\hh$, as long as there is an oracle to solve \textsc{$\mathcal{H}_i$-deletion} parameterized by solution size for each individual class~$\mathcal{H}_i$. To accommodate this setting, the algorithm maintains an $\mathcal{H}_i$-deletion set~$X_i$ for \emph{each} graph class~$\mathcal{H}_i$. A step of the decomposition algorithm then either consists of finding a balanced separation of~$S$, splitting of a base component, or improving \emph{one of the} deletion sets~$X_i$ (which can occur only~$t \cdot |V(G)|$ times).

The decomposition algorithm we presented has an approximation factor of 5.
\mic{It may be possible to obtain a smaller approximation ratio at the expense of a worse base of the exponent, by \bmp{repeatedly} splitting large bags~\cite{BellenbaumD02,Korhonen21,KorhonenL22}.}
For obtaining single-exponential \hhdel algorithms, the advantage of the improved approximation factor would be immediately lost due to the increased running time and therefore we did not pursue this direction.
A final direction for future work concerns the optimization of the polynomial part of the running time. For standard treewidth, a 2-approximation can be computed in time~$2^{\Oh(k)} \cdot n$~\cite{Korhonen21}, which was obtained after a long series of improvements (cf.~\cite[Table 1]{BodlaenderDDFLP16}) on both the approximation factor and dependence on~$n$. Can a constant-factor approximation to~$\mathcal{H}$-treewidth be computed in time~$2^{\Oh(k)}\cdot (n + m)$ for graph classes~$\mathcal{H}$ like bipartite graphs?

\bibliography{bib}

\end{document}